\newcommand{\bp}{\begin{proof} \small }
\newcommand{\ep}{\end{proof} \normalsize}
\newcommand{\epx}{\end{proof} \small}
\newcommand{\bpa}{\begin{proofappx} \footnotesize }
\newcommand{\epa}{\end{proofappx} \small }
\newtheorem{theorem}{Theorem}
\newtheorem{proposition}{Proposition}
\newtheorem{lemma}{Lemma}
\newtheorem*{theorem*}{Theorem}
\newtheorem*{proposition*}{Proposition}
\newtheorem*{corollary*}{Corollary}
\newtheorem*{lemma*}{Lemma}
\newtheorem*{assumption*}{Assumption}
\newtheorem*{definition*}{Definition}
\newtheorem*{claim*}{Claim}
\newcommand{\be}{\begin{equation}}
\newcommand{\ee}{\end{equation}}
\newcommand{\bs}{\begin{subequations}}
\newcommand{\es}{\end{subequations}}
\newcommand{\bq}{\begin{eqnarray}}
\newcommand{\eq}{\end{eqnarray}}
\newcommand{\bqn}{\begin{eqnarray*}}
\newcommand{\eqn}{\end{eqnarray*}}
\newcommand{\ba}{\left[ \begin{array}}
\newcommand{\ea}{\\ \end{array} \right]}
\newcommand{\ben}{\begin{enumerate}}
\newcommand{\een}{\end{enumerate}}
\def\a{{\boldsymbol{a}}}
\def\b{{\boldsymbol{b}}}
\def\q{{\boldsymbol{q}}}
\def\real{{\mathchoice%
{\hbox{\rm\setbox1=\hbox{I}\copy1\kern-.45\wd1 R}}
{\hbox{\rm\setbox1=\hbox{I}\copy1\kern-.45\wd1 R}}
{\hbox{\scriptsize\rm\setbox1=\hbox{I}\copy1\kern-.45\wd1 R}}
{\hbox{\scriptsize\rm\setbox1=\hbox{I}\copy1\kern-.45\wd1 R}}}}
\def\Zint{{\mathchoice{\setbox1=\hbox{\sf Z}\copy1\kern-.75\wd1\box1}
{\setbox1=\hbox{\sf Z}\copy1\kern-.75\wd1\box1}
{\setbox1=\hbox{\scriptsize\sf Z}\copy1\kern-.75\wd1\box1}
{\setbox1=\hbox{\scriptsize\sf Z}\copy1\kern-.75\wd1\box1}}}
\newcommand{\complex}{ \hbox{\rm C\kern-0.45em\rule[.07em]{.02em}{.58em}%
\kern 0.43em}}
\begin{document}
%
\title{Client Selection and Bandwidth Allocation in Wireless Federated Learning Networks: A Long-Term Perspective}
%
%
%

\author{Jie~Xu, Heqiang~Wang
\thanks{J. Xu and H. Wang are with the Department
of Electrical and Computer Engineering, University of Miami, FL, USA}
}

\maketitle

\begin{abstract}
This paper studies federated learning (FL) in a classic wireless network, where learning clients share a common wireless link to a coordinating server to perform federated model training using their local data. In such wireless federated learning networks (WFLNs), optimizing the learning performance depends crucially on how clients are selected and how bandwidth is allocated among the selected clients in every learning round, as both radio and client energy resources are limited. While existing works have made some attempts to allocate the limited wireless resources to optimize FL, they focus on the problem in individual learning rounds, overlooking an inherent yet critical feature of federated learning. This paper brings a new long-term perspective to resource allocation in WFLNs, realizing that learning rounds are not only temporally interdependent but also have varying significance towards the final learning outcome. To this end, we first design data-driven experiments to show that different temporal client selection patterns lead to considerably different learning performance. With the obtained insights, we formulate a stochastic optimization problem for joint client selection and bandwidth allocation under long-term client energy constraints, and develop a new algorithm that utilizes only currently available wireless channel information but can achieve long-term performance guarantee. Further experiments show that our algorithm results in the desired temporal client selection pattern, is adaptive to changing network environments and far outperforms benchmarks that ignore the long-term effect of FL.

\end{abstract}


%
\IEEEpeerreviewmaketitle

\section{Introduction}

Mobile devices nowadays generate a massive amount of data each day. This rich data has the potential to power a wide range of machine learning (ML)-based applications, such as learning the activities of smart phone users, predicting health events from wearable devices or adapting to pedestrian behavior in autonomous vehicles. Due to the growing storage and computational power of mobile devices as well as privacy concerns associated with uploading personal data, it is increasingly attractive to store and process data directly on each mobile device. The aim of ``federated learning'' (FL) \cite{konevcny2016federated} is to enable mobile devices to collaboratively learn a shared ML model with the coordination of a central server while keeping all the training data on device, thereby decoupling the ability to do ML from the need to upload/store the data in the cloud.

This paper focuses on FL in a classic wireless network setting where the clients, e.g., mobile devices, share a common wireless link to the server. We call this system a \textit{wireless federated learning network} (WFLN). The network operates for a number of learning rounds as follows: in each round, the clients download the current ML model from the server, improve it by learning from their local data, and then upload the individual model updates to the server via the wireless link; the server then aggregates the local updates to improve the shared model. Similar to a traditional throughput-oriented wireless network, the limited wireless network resources require the WFLN to determine in each round which clients access the wireless channel to upload the model updates and how much bandwidth is allocated to each client. However, due to the specific application in consideration, namely FL, the resource allocation objective and consequently the outcome can be very different from, e.g., throughput maximization.

Optimizing WFLNs faces unique challenges compared to optimizing either FL or the traditional wireless networks. On the one hand, the wireless network sets resource constraints on performing FL as the finite wireless bandwidth limits the number of clients that can be selected in each round, and the selection must be adaptive to the highly variable wireless channel conditions. On the other hand, FL is likely to change the way wireless networks should be optimized as model training is a complex \textit{long-term} process where decisions across rounds are interdependent and collectively decide the final training performance. Further, since mobile devices often have finite energy budgets due to, e.g., a finite battery, the number of rounds each individual mobile device can participate during the entire course of FL is also limited. An extremely crucial yet largely overlooked question is: does learning in different rounds contribute the same or differently to the final learning outcome and hence should the wireless resources be allocated discrepantly across rounds? Without a good understanding of its answer, conventional wireless network optimization approaches that treat each time slot independently and equally may lead to considerably suboptimal FL performance.

This paper aims to formalize this fundamental problem of client selection and bandwidth allocation in WFLNs and derive critical knowledge to enable the efficient operation of these networks. We study how resources (i.e., bandwidth and energy) should be allocated among clients in each learning round as well as \textit{across} rounds given finite client energy budgets in a volatile network environment. Our main contributions are summarized as follows.

(1) While existing works \cite{yang2020federated, zeng2019energy} have shown that including more clients in FL generally improves the learning performance, there is little understanding of how this improvement depends on the learning rounds. For a fixed total number of selected clients during the entire course of FL, should client selection be uniform across rounds or more biased toward the early/later FL rounds? Although analytical characterization seems extremely difficult, we show in two representative ML tasks, i.e., image classification and text generation, that selecting more clients in the later FL rounds not only achieves higher accuracy and lower training loss but also is more robust than selecting more clients in the early FL rounds. This finding, to our best knowledge, is the first that relates the temporal client selection pattern to the final FL performance.

(2) With the understanding of a desired temporal client selection pattern, we formulate a \textit{long-term} client selection and bandwidth allocation problem for a finite number of FL rounds under finite energy constraints of individual clients. Because wireless channel conditions vary over time but future conditions are unpredictable, we leverage the Lyapunov technique \cite{neely2010stochastic} to convert the long-term problem into a sequence of per-round problems via a virtual energy deficit queue for each client. A new online optimization algorithm called OCEAN is proposed, which in each FL round solves a finite number of convex optimization problems using only currently available wireless information, and hence the algorithm is practical and has low complexity.

(3) We prove that OCEAN achieves the FL performance of the desired client selection pattern within a bounded gap while approximately satisfying the energy constraints of the clients. Specifically, OCEAN demonstrates an $[O(1/V), O(\sqrt{V})]$ learning-energy tradeoff where $V$ is an algorithm parameter. In addition, we investigate the structure of the client selection and bandwidth allocation outcome. Our findings are two-fold: in each round, clients are selected according to a priority metric, which is the ratio of the client's current energy deficit queue length and its current wireless channel state. However, among the selected ones, more bandwidth is allocated to clients with a lower priority (i.e., worse channel and larger energy deficit queue length). This is in stark contrast to a traditional throughput-oriented wireless network where more bandwidth is allocated to clients with a better channel condition in order to maximize throughput.

\section{Related Work}
Since the proposal of FL \cite{konevcny2016federated, konevcny2016federated2}, a lot of research effort has been devoted to tackling various challenges in this new distributed machine learning framework, including developing new optimization and model aggregation algorithms \cite{karimireddy2019scaffold, li2018federated, haddadpour2019convergence}, handling non-i.i.d. and unbalanced datasets \cite{zhao2018federated,smith2017federated,corinzia2019variational}, and preserving model privacy \cite{bhowmick2018protection, geyer2017differentially, truex2019hybrid, bonawitz2017practical, nasr2018comprehensive} etc. Among these challenges, improving the communication efficiency of FL has been a key challenge due to the tension between uploading a large amount of data for model aggregation and the limited network resource to support this transmission. In this regard, a strand of literature focuses on modifying the FL algorithm itself to reduce the communication burden on the network, e.g., updating clients with significant training improvement \cite{chen2018lag}, compressing the gradient vectors via quantization \cite{lin2017deep}, or accelerating training using sparse or structured updates \cite{aji2017sparse,konevcny2016federated}. Hierarchical FL networks \cite{liu2019edge} have also been proposed where multiple edge servers perform partial model aggregation first, whose outputs are further aggregated by a cloud server. Recognizing the unique physical property of wireless transmission, \cite{yang2020federated,amiri2020federated} propose analog model aggregation over the air, provided that a very stringent synchronization is available.

As wireless networks are the envisioned main deployment scenario of FL, how to optimally allocate the limited bandwidth and energy resources for FL has also received much attention. Many existing works \cite{wang2019adaptive,tran2019federated,mo2020energy,zhan2020experience} study the inherent trade-off between local model update and global model aggregation, e.g., to adapt the frequency of global aggregation \cite{wang2019adaptive} or to optimize uplink transmission power/rate and the local update CPU frequency \cite{mo2020energy,zhan2020experience}. In all these works, all clients participate in every FL round. Although both empirical studies \cite{yang2020federated,zeng2019energy} and theoretical analysis \cite{stich2018local} show that including more clients improves the FL convergence speed, the limited bandwidth of wireless networks cannot support many clients to upload their local updates at the same time. For FL at scale, client scheduling policies, which select only a subset of clients in every round, are necessary. In \cite{yang2019scheduling}, the convergence performance of FL under three basic scheduling policies, namely random, round-robin and proportional fair, is analyzed. Different types of joint bandwidth allocation and client scheduling policies, e.g., \cite{zeng2019energy,chen2019joint,shi2019device,nishio2019client,yang2019energy,chen2020convergence}, have been proposed to either minimize the learning loss or the training time. However, their optimization problems are formulated by considering individual FL rounds separately or treating every FL round equally, and hence the same network resources are allocated across learning rounds. Our paper differs from these works in that we explicitly consider the varying significance of FL rounds and study a long-term bandwidth allocation and client selection problem under long-term energy constraints and with uncertain wireless channel information.

\section{Impact of Temporal Client Selection Pattern}
Existing works \cite{yang2020federated, zeng2019energy} have shown that the FL performance (in terms of training loss and prediction accuracy) can be improved by selecting more clients in each round. However, selecting more clients is not always possible if each client is subject to a long-term energy constraint due to, e.g., a finite battery: selecting more clients in early learning rounds  depletes the battery of the clients and hence fewer clients can be selected in later learning rounds. Hence, even with the same average number of selected clients, the temporal pattern can be considerably different, yet there is little understanding of how the temporal pattern affects the final FL outcome. In this section, we design two experiments to show that the temporal client pattern indeed has a considerable impact on the final FL performance.

\begin{figure}[htbp]
\centering
\begin{minipage}[t]{0.48\textwidth}
\centering
\includegraphics[width=8cm]{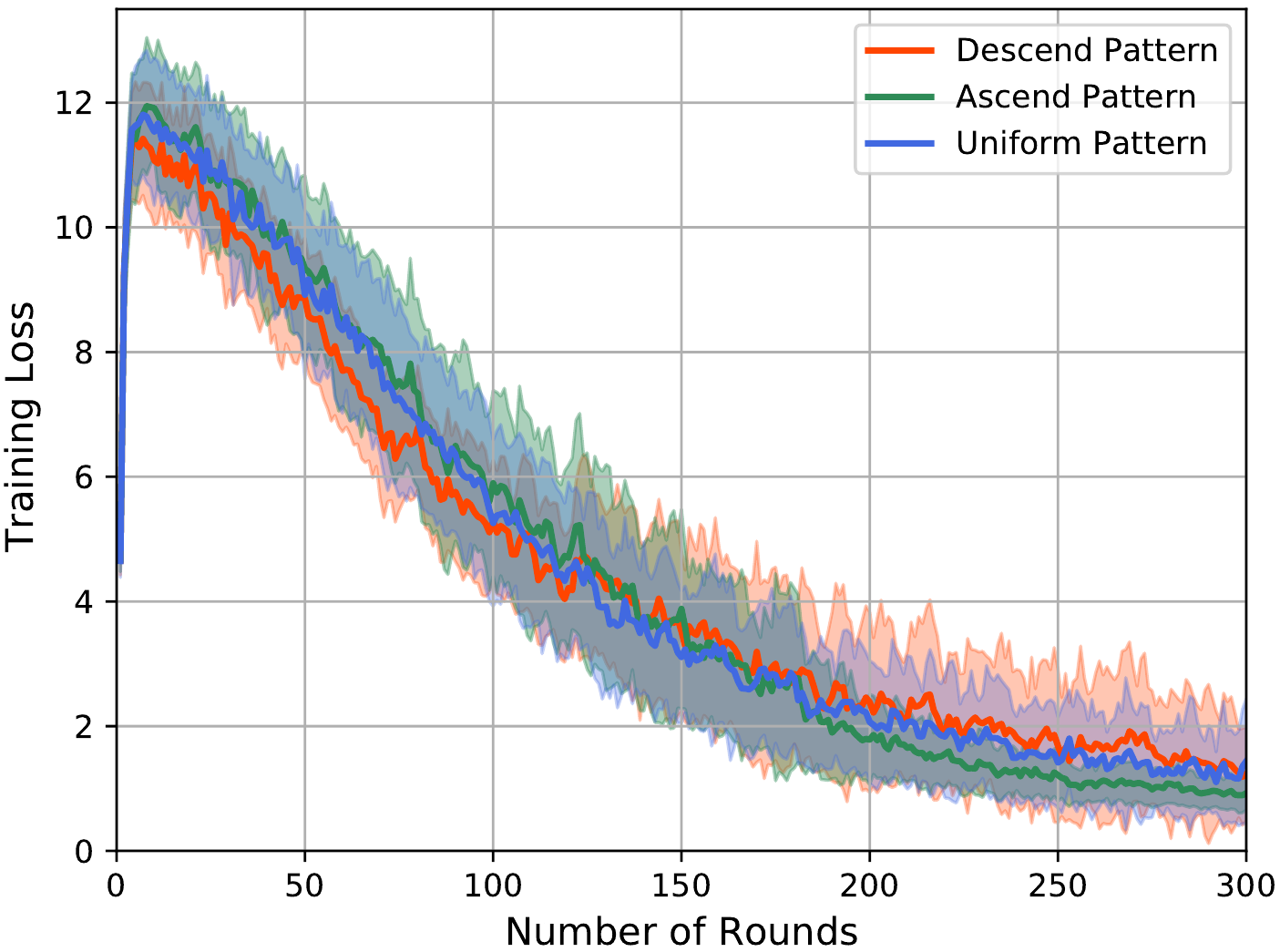}
\setlength{\abovecaptionskip}{0pt}
\caption{\label{fig1_1}Training Loss (MNIST Dataset)}
\end{minipage}
\begin{minipage}[t]{0.48\textwidth}
\centering
\includegraphics[width=8cm]{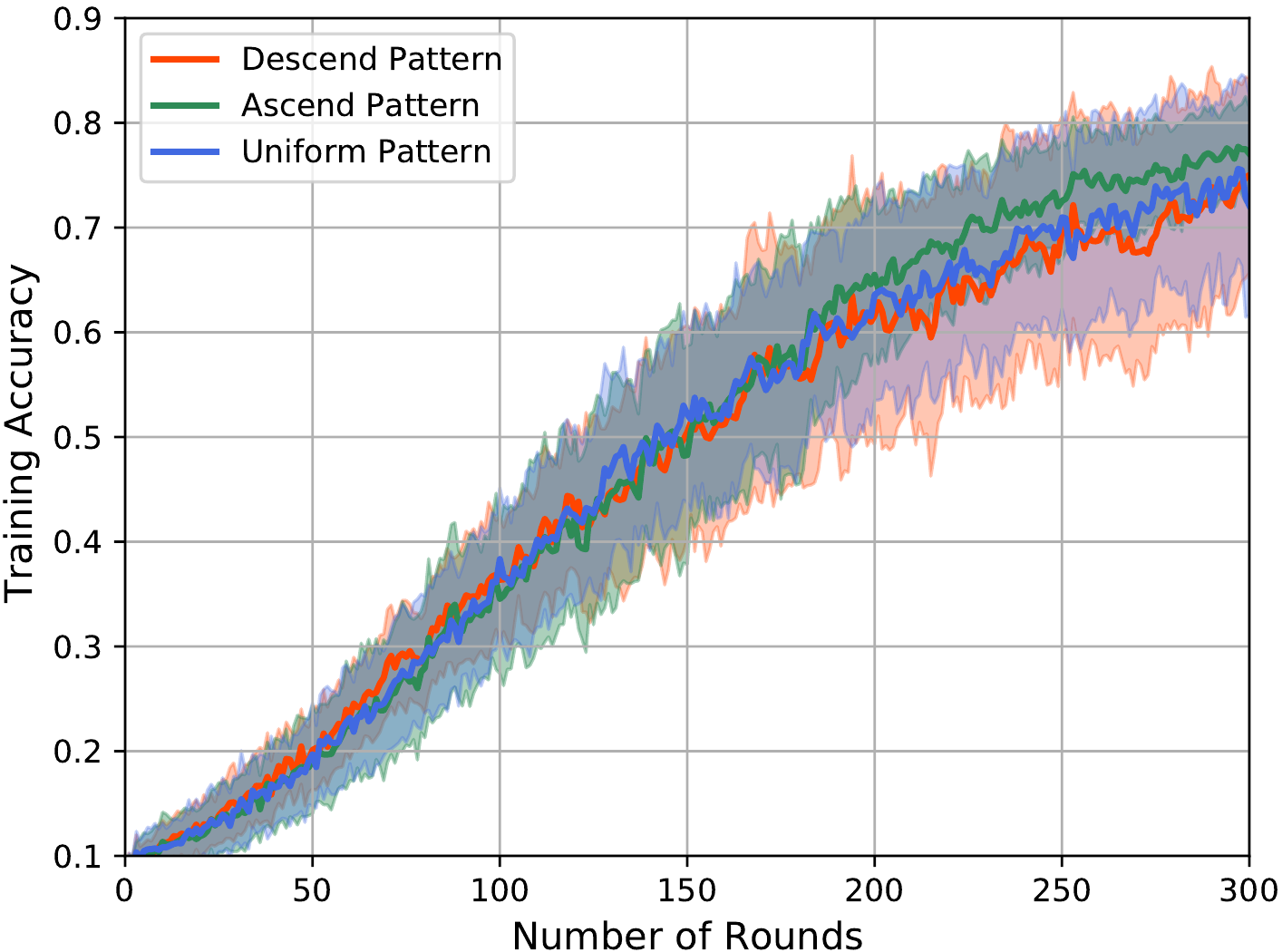}
\setlength{\abovecaptionskip}{0pt}
\caption{\label{fig1_2}Accuracy (MNIST Dataset)}
\end{minipage}
\end{figure}

\begin{figure}[htbp]
\centering
\begin{minipage}[t]{0.48\textwidth}
\centering
\includegraphics[width=8cm]{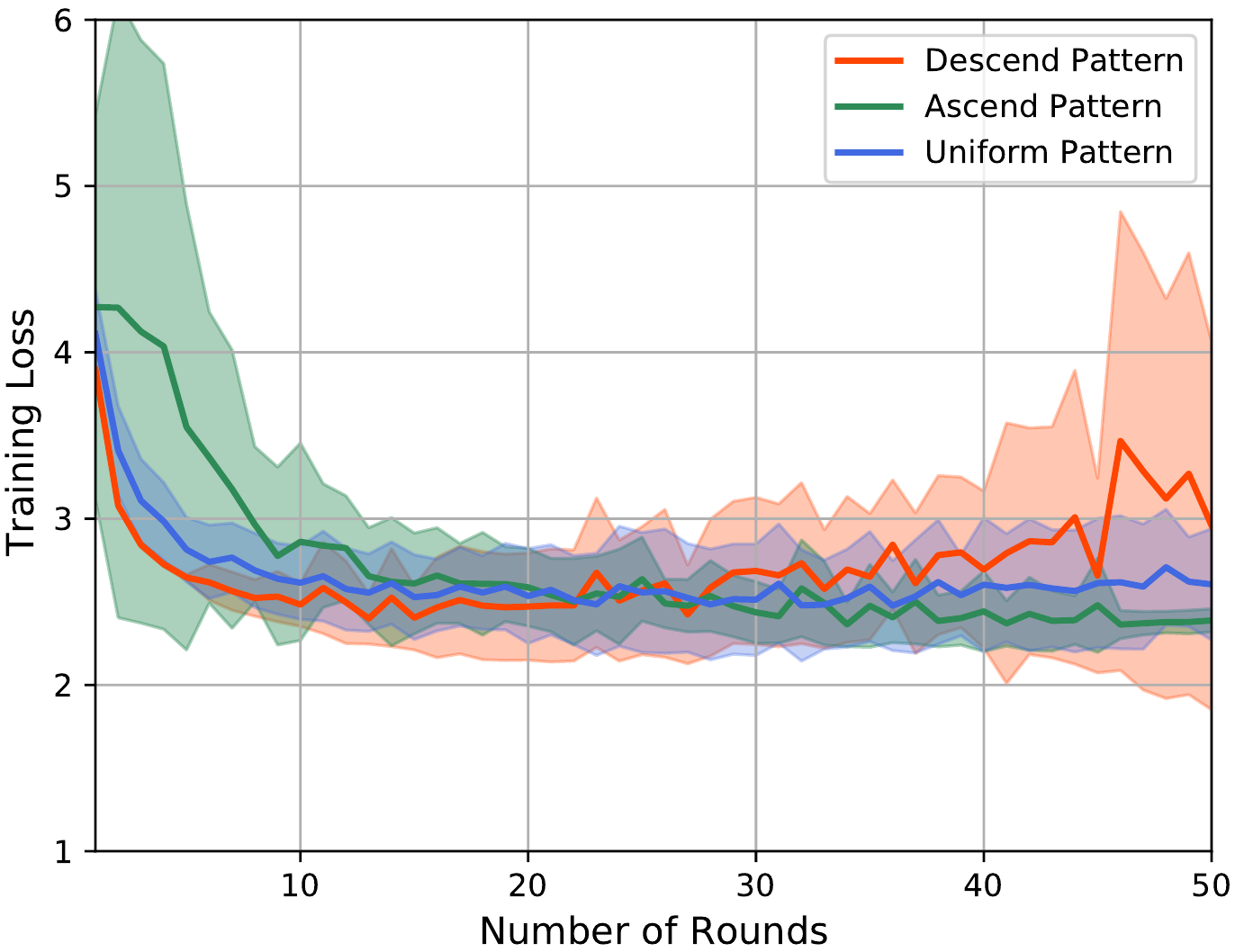}
\setlength{\abovecaptionskip}{0pt}
\caption{\label{fig2_2}Training Loss (Shakespeare Dataset)}
\end{minipage}
\begin{minipage}[t]{0.48\textwidth}
\centering
\includegraphics[width=8cm]{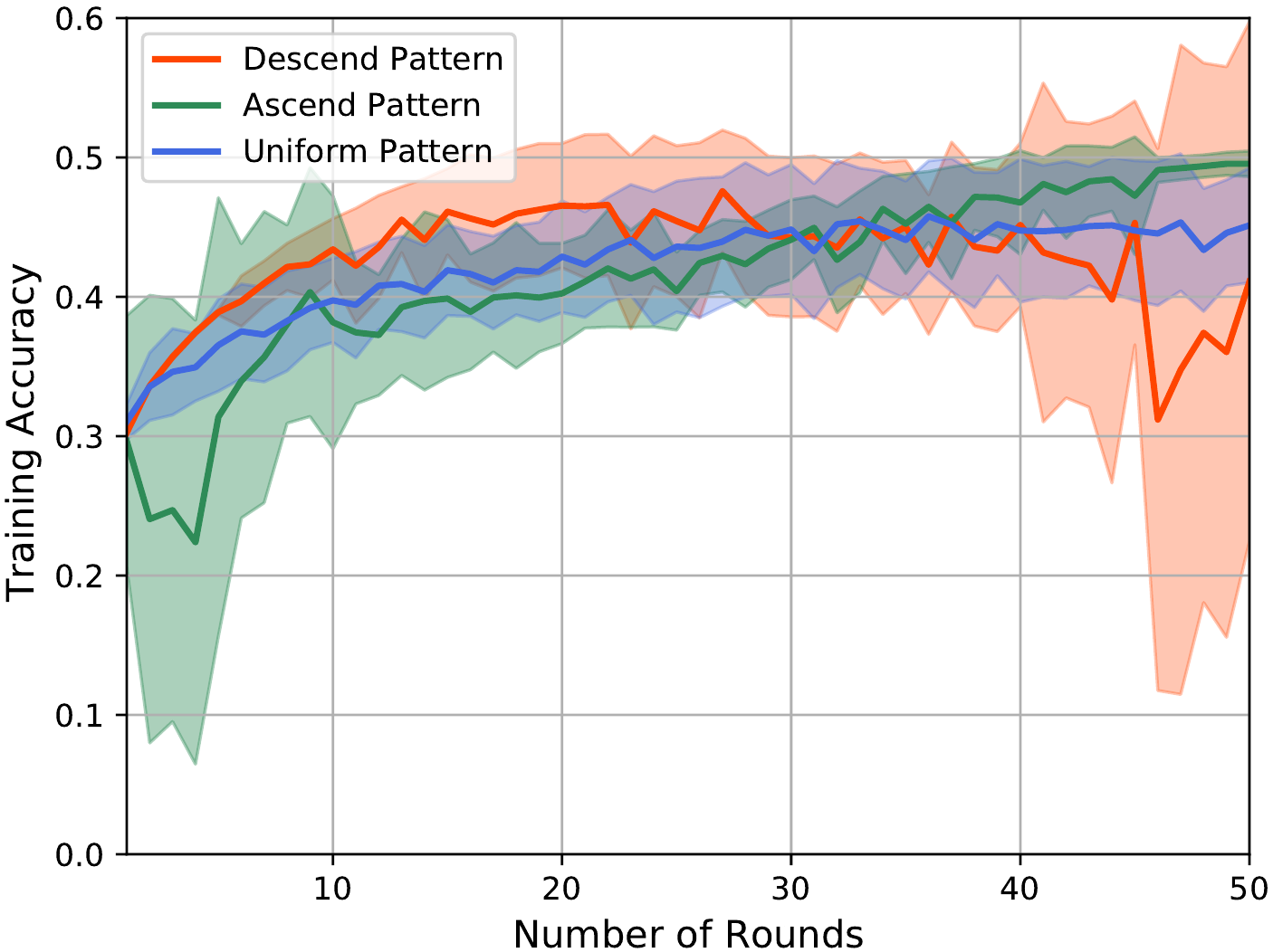}
\setlength{\abovecaptionskip}{0pt}
\caption{\label{fig2_1}Accuracy (Shakespeare Dataset)}
\end{minipage}
\end{figure}

\subsection{Image Classification on the MNIST Dataset}
Our first experiment is conducted using the TensorFlow Federated (TFF) framework \cite{tff2020} on the MNIST dataset for image classification. A deep neural network (DNN) classifier is trained on 10 clients (index from 1 through 10) using the FedAvg algorithm \cite{konevcny2016federated} over a total of 300 rounds. Three temporal selection patterns are investigated: \textbf{Uniform} -- in each round, 5 clients are randomly selected to upload their model parameters; \textbf{Ascend} -- the number of selected clients gradually increases from 1 to 10 over 300 rounds with an average number of 5 clients selected per round; \textbf{Descend} -- the number of selected clients gradually decreases from 10 to 1 over 300 rounds with an average number of 5 clients selected per round.

Figures \ref{fig1_1} and \ref{fig1_2} illustrate the training loss and the prediction accuracy, respectively, over 300 rounds for the three temporal patterns. Each curve is generated by averaging over 60 runs and the standard deviation of these curves are also shown in the figures. As can be seen, although the average number of selected clients is the same, different temporal patterns result in different training loss and prediction accuracy by the end of the 300 rounds. In particular, \textbf{Ascend} results in the best performance compared to \textbf{Uniform} and \textbf{Descend}. There is a good reason behind this result: early learning rounds are ``easy'' rounds where the learning performance is less sensitive to the number of selected clients. Hence, even if \textbf{Ascend} selects fewer clients in the early rounds, learning speed is minimally affected. However, the later learning rounds are the more ``difficult'' rounds, and to push accuracy even higher requires more clients to update the shared model using their data. In fact, not only \textbf{Ascend} wins in training loss and accuracy, but it is also much more robust as the standard deviation is much smaller. This is again because more clients participate in model updating towards the end of learning, which can smooth out abrupt changes in the learned model of individual clients.

\subsection{Text Generation on the Shakespeare Dataset}
To verify that the above findings are generalizable, we conduct a similar experiment on a text generation task. We utilize the decentralized text generation dataset based on \textit{The Complete Works of Shakespeare} provided in the TensorFlow Federated tutorial \cite{tff2020}. A Recurrent Neural Network  with eager execution is pre-trained on the text from Charles Dickens' \textit{A Tale of Two Cities} and \textit{A Christmas Carol} as the initial model, and FL is used to fine-tune this model for the Shakespeare dataset. As can be seen in Figures \ref{fig2_2} and \ref{fig2_1}, although the task and the dataset are very different, similar observations can be made as before: the \textbf{Ascend} selection pattern significantly outperforms \textbf{Descend} and \textbf{Uniform} in terms of training loss, accuracy and robustness.

We note that the exact optimal temporal selection pattern seems impossible to analytically characterize, which would also change across different learning tasks, models, datasets and algorithms. However, as the general ascending trend leads to considerable performance improvement, it offers valuable guidance for designing client selection schemes across rounds.

\section{Wireless Federated Learning Network Model}
With the insights obtained in Section III, we now move on to optimize the WFLN. Consider a WFLN with one server and $K$ clients, indexed by the set $\mathcal{K} = \{1,...,K\}$. Each participating client $k \in \mathcal{K}$ has a local dataset $\mathcal{D}_k$. In the supervised learning case, $\mathcal{D}_k$ defines the collection of data samples given as a set of input-output pairs $\{x_i, y_i\}_{i=1}^{D_k}$, where $x_i \in \mathbb{R}^d$ is a $d$-dimensional input feature vector, and $y_i \in \mathbb{R}$ is the ground-truth output label. This data can be generated through the usage of the client via mobile applications and can be employed for various ML tasks, e.g., user activity prediction or health event prediction.

FL iterates between two steps: 1) the server updates the global model by aggregating local models transmitted over a multi-access channel by the clients; 2) the clients update their local models using the global model broadcasted by the server. We call each iteration a \textit{learning round}. WFLN has to decide in each round which clients upload their local model updates depending on their wireless channel condition and remaining battery to maximize the learning performance. We use $a^t_k \in \{1, 0\}$ to denote whether or not client $k$ is selected in round $t$, and $\boldsymbol{a}^t = (a^t_1, ..., a^t_K)$ collects the overall client selection decisions.

\subsection{Client Energy Consumption}
For a selected client $k$ in round $t$ (i.e., $a^t_k = 1$), it incurs energy consumption due to uploading the local updates to the edge server via the wireless channel. We consider a specific wireless multi-access scheme, i.e., orthogonal frequency-division multiple access (OFDMA) for local model uploading with a total bandwidth $B$. Let $b^t_k \in [0,1]$ be the bandwidth allocation ratio for client $k$ in round $t$, and hence its allocated bandwidth is $b^t_k B$. Let $\b^t = (b^t_1, ..., b^t_K)$. Bandwidth allocation must satisfy $\sum_{k\in\mathcal{K}} b^t_k = 1, \forall t$. Clearly, if $a^t_k = 0$, namely client $k$ is not selected in round $t$, then it is the best not to allocate any bandwidth to this client, i.e., $b^t_k = 0$. On the other hand, if $a^t_k = 1$, then we require that at least a minimum bandwidth $b_{min}$ is allocated to client $k$, i.e., $b^t_k \geq b_{min}$. This is because practical systems cannot assign an arbitrarily small bandwidth to an individual client. In addition, a close-to-zero bandwidth allocation will require an extremely high transmit power and hence result in an extremely high energy consumption to achieve a target transmission rate. To make the problem feasible, we assume $b_{min} \leq 1/K$.

Let $p_k^t$ denote the transmission power (in Watt/Hz) of client $k$ in round $t$. The achievable rate (in bit/s), denoted by $r_k^t$, can be written according to the Shannon's formula as
\begin{align}\label{tx_rate}
r^t_k = b^t_k B \log_2\left(1 + \frac{p^t_k (h^t_k)^2}{N_0}\right)
\end{align}
where $N_0$ is the variance of the complex white Gaussian channel noise and $h^t_k$ is the channel state of client $k$ in round $t$. Let $L$ denote the data size of the adopted machine learning model (in bit), then the time needed to upload the local model update to the edge server is $\tau_k = L/r^t_k$. For a target  upload time deadline $\bar{\tau}$, the required transmission power can be derived using \eqref{tx_rate} and hence the transmission energy consumption of client $k$ is
\begin{align}
E(a^t_k, b^t_k|h^t_k) = \frac{\bar{\tau}N_0 B b_k^t}{(h^t_k)^2}  \left(2^{\frac{L}{\bar{\tau}B b^t_k}}-1\right) a^t_k
\end{align}

\subsection{System Learning Performance}
Existing works and our empirical study show that in order to accelerate learning, it is desirable for the WFLN to select as as many clients as possible in each round. However, client selection is constrained by finite radio and battery resources. Therefore, the WFLN must judiciously select clients to perform federated learning in each round, without quickly draining clients' battery and causing insufficient model updates in later communication rounds. To this end, we introduce the following metric to describe the FL performance in round $t$:
\begin{align}\label{learn_per}
U^t(\a^t) = \eta^t \sum_{k=1}^K a^t_k
\end{align}
where $\eta^t$ is a temporal weight to capture the varying significance of selecting more clients in different learning rounds. As suggested by Section III, an increasing sequence of $\eta^t$ often results in better FL performance as more clients are likely to be selected in later rounds of learning.

We note, however, although the above metric will facilitate our subsequent resource allocation, it does not exactly characterize the FL speed or accuracy, which is extremely difficult, if not impossible, to model due to the complex and non-convex nature of many ML algorithms.




\subsection{Problem Formulation}
As we emphasize the long-term performance and final outcome of FL, the goal is to maximize the weighted sum of selected clients defined in \eqref{learn_per} for a total number of $T$ learning rounds while satisfying the long-term energy budget constraints of individual clients, through joint client selection $\a^t$ and bandwidth allocation $\b^t$ in every round $t = 0, ..., T-1$. Although the performance metric defined in \eqref{learn_per} is artificial, we will relate it to the actual FL performance (i.e., training loss and accuracy) in experiments. Formally, the problem that we aim to solve is
\begin{align}
\textbf{P1}&~~~\max_{\a^0, \b^0, ..., \a^{T-1}, \b^{T-1}} \sum_{t=0}^{T-1} U^t(\a^t)\\
\text{s.t.}&~~~\sum_{t=0}^{T-1} E(a^t_k, b^t_k|h^t_k) \leq H_k, \forall k \label{con:battery}\\
&~~~b_{min} \leq b^t_k \leq 1, \forall k, \forall t,~~~\sum_{k=1}^K b^t_k = 1, \forall t \label{con:band}\\
&~~~a^t_k \in \{0, 1\}, \forall k, \forall t \label{con:selection}
\end{align}
Constraint \eqref{con:battery} requires that the total energy consumption over the $T$ rounds for each client $k$ does not exceed an energy budget $H_k$ (e.g., battery capacity or energy limit set by the client). Constraint \eqref{con:band} is the feasibility condition on the bandwidth allocation. Constraint \eqref{con:selection} is the feasibility condition on the client selection.

So far we have formulated a long-term optimization problem for client selection and bandwidth allocation in WFLNs. However, several challenges impede the derivation of the optimal solution to \textbf{P1}. The first is the lack of future information: optimally solving \textbf{P1} requires complete offline information (i.e., channel conditions) over the entire FL period (i.e., $T$ learning rounds) that is very difficult to accurately predict in advance. Furthermore, \textbf{P1} belongs to mixed-integer nonlinear programming and is difficult to solve, even if the long-term future information is accurately known a priori. Thus, these challenges demand an online approach that can efficiently make joint client selection and bandwidth allocation decisions without foreseeing the far future.

\subsection{Offline Benchmark: $R$-Round Lookahead Algorithm}
Before we propose the online algorithm, we first introduce an offline algorithm with $R$-round lookahead information (i.e., the channel information in the next $R$ learning rounds are \textit{assumed} to be known) as a benchmark. Specifically, we divide the entire FL period into $M \geq 1$ frames, each having $R \geq 1$ learning rounds such that $T = MR$, and present the following problem formulation:
\begin{align}
\textbf{P2}:&~~~\max_{\a^0, \b^0, ..., \a^{T-1}, \b^{T-1}} \sum_{t = mR}^{(m+1)R - 1} U^t(\a^t)\\
\text{s.t.}&~~~\sum_{t=mR}^{(m+1)R -1} E(a^t_k, b^t_k|h^t_k) \leq H_k/M, \forall k\\
&~~~\text{Constraints \eqref{con:band}, \eqref{con:selection}} \nonumber
\end{align}
Essentially, \textbf{P2} defines a family of offline algorithms parameterized by the lookahead window size $R$.  Clearly, there exists at least one sequence of joint client selection and bandwidth allocation decisions that satisfies all constraints of \textbf{P2} (e.g., no client is selected in any round in each frame). We denote the optimal learning performance for the $m$-th frame by $U^*_m$, for $m = 0, ..., M-1$, considering all the decisions that satisfy the constraints and have perfect information over the frame. Thus, the optimal long-term learning performance achieved by the oracle's optimal $R$-round lookahead algorithm is given by $\sum_{m=0}^{M-1} U^*_m$.

We note that because of the assumed lookahead information, the $R$-round lookahead algorithms are impractical (unless $R = 1$). The purpose of introducing these algorithms is only to use them as a benchmark for our practical online algorithm to be proposed in the next section.

\section{Online Client Selection and Bandwidth Allocation}
In this section, we develop the \underline{O}nline \underline{C}lient s\underline{E}lection and b\underline{A}ndwidth allocatio\underline{N} algorithm, called OCEAN, and then characterize its structural properties. We also prove that it is efficient compared to the optimal offline algorithm with $R$-round lookahead information.

\subsection{The OCEAN Algorithm}
A major challenge of directly solving \textbf{P1} is that the long-term energy constraint of the clients couples the client selection and bandwidth allocation decisions across different learning rounds: selecting more clients in the current round reduces the bandwidth allocated to each individual client, thereby increasing the energy consumption of these clients; furthermore, more energy consumption in the current round potentially reduces the energy budget available for future FL rounds, and yet the decisions have to be made without foreseeing the future. To address this challenge, we leverage the Lyapunov technique and construct a virtual energy deficit queue $q_k(t)$ for each client $k$ to guide the client selection and bandwidth allocation decisions to follow the long-term energy constraint. The virtual energy queue of client $k$ starts with $q_k(0) = 0, \forall k$, and is updated at the end of each round $t$ as follows
\begin{align}\label{queue}
q_k(t+1) = [E(a^t_k, b^t_k|h^t_k) - H_k/T + q_k(t)]^+
\end{align}
where $[\cdot]^+ = \max\{\cdot, 0\}$. Hence, $q_k(t)$ is the queue length indicating the deviation of the current energy consumption of client $k$ from its long-term energy constraint $H_k$. Let $\q(t) = (q_1(t), q_2(t), ..., q_K(t))$ collect the energy deficit queues for all clients.

\begin{algorithm}[t]
	\caption{OCEAN}
	\begin{algorithmic}[1]
		\State \textbf{Input}: $q_k(0) = 0, \forall k$ and $R$
        \For {$t = 1, 2, ..., T$}
            \If {$t = mR, \forall m = 1, ..., M-1$}
            \State $q_k(t) \leftarrow 0, \forall k$ and $V \leftarrow V_m$
            \EndIf
            \State Observe the current channel state $h^t_k, \forall k$
            \State Solve \textbf{P3}
            \State Update energy queue according to \eqref{queue}
        \EndFor
	\end{algorithmic}\label{alg:OCEAN}
\end{algorithm}

We now present OCEAN in Algorithm \ref{alg:OCEAN}. OCEAN is purely online and requires only the currently available channel state information as inputs (i.e. $h_k(t), \forall k$). We use $V_0, V_1, ..., V_{M-1}$ to denote a sequence of positive control parameters to dynamically adjust the tradeoff between maximizing the number of selected clients and minimizing energy consumption over the $M$ frames, each having $R$ communication rounds. The importance of the control parameters will be revisited in Section V.C. In every round $t$, we aim to solve the following per-round problem:
\begin{align}
\textbf{P3}&~~~\max_{\a^t, \b^t} V\cdot U^t(\a^t) - \sum_{k=1}^K q_k(t) E(a^t_k, b^t_k|h^t_k)\\
\text{s.t.} &~~~ \eqref{con:band},\eqref{con:selection}
\end{align}
By considering the additional term $ \sum_{k=1}^K q_k(t) E(a^t_k, b^t_k|h^t_k)$, the system takes into account the energy deficit of the clients during the current round's client selection and bandwidth allocation. As a consequence, when $q_k(t)$ is larger, minimizing the energy deficit is more critical. Thus, our algorithm works following the philosophy of ``if violate the energy constraint, then use less energy'', and the energy deficit queue maintained without foreseeing the future guides the system towards meeting the energy constraints of the clients. OCEAN decomposes the long-term optimization problem into a series of per-round problems \textbf{P3}. For a more rigorous derivation of this decomposition, please refer to the proof of Theorem 1. Now, to complete OCEAN, it remains to solve \textbf{P3}, which however is still very difficult.

\subsection{Solving the Per-Round Problem}
The per-round problem \textbf{P3} is a difficult mixed-integer problem. To see more clearly how the objective function depends on $\a^t$ and $\b^t$, we write out and rearrange it as follows
\begin{align}
\max_{\a^t, \b^t}~~ \sum_{k=1}^K\left[V\eta^t - q_k(t-1)\cdot \frac{\bar{\tau}N_0 B b_k^t}{(h^t_k)^2}  \left(2^{\frac{L}{\bar{\tau}B b^t_k}}-1\right)\right] a^t_k
\end{align}
Notice that $a^t_k$ is a binary integer variable and $b^t_k$ is a continuous variable in $[b_{\text{min}}, 1]$. In general, mixed-integer problems are difficult to solve and often there is no polynomial-time optimal algorithm. Fortunately, our problem \textbf{P3} exhibits a special structure and we are able to exploit this structure to develop an algorithm that returns the optimal solution by solving at most $K$ convex optimization problems. To simplify the notations, we drop the index $t$ in this subsection.

Our algorithm to solve \textbf{P3}, called OCEAN-P, incrementally adds clients into the selection set $S$ based on a metric $\rho_k \triangleq \frac{q_k(t)}{(h^t_k)^2}$, which we call the \textit{selection priority} (the lower value, the higher priority). Initially, all clients with $\rho_k = 0$ (which also means $q_k(t) = 0$ as $(h^t_k)^2$ is always positive) are added into $S$. We denote this initial set by $S^0$. Then, clients with $\rho_k > 0$ are added into $S$ one by one in the ascending order of $\rho_k$, and for each possible selection set, the corresponding bandwidth allocation is computed by solving the following optimization problem
\begin{align}
\textbf{P4}~~~\max_{\{b^t_k\}_{k \in S - S^0}}&~~ \sum_{k\in S -S^0} \left(V\eta^t - \rho_k N_0  \tilde{\tau} B b_k\left(2^\frac{L}{\tilde{\tau} B b_k}-1\right)\right)\\
\text{s.t.}&~~\sum_{k \in S-S^0} b^t_k = 1 - |S^0|\cdot b_{min}\\
&~~b_k \geq b_{min}, \forall k \in S-S^0
\end{align}
Let $\b^*(S)$ be the optimal bandwidth allocation for a given selection set $S$, and $W^*(S)$ be the optimal value. Clearly, for the initial set $S^0$, $W^*(S^0) = \eta^t|S^0|$ as $\rho_k = 0, \forall k \in S^0$. The number of selection sets that can possibly emerge following the above set expanding rule, which are collected in $\mathcal{S}$, is at most $K$. Finally, the implemented optimal selection is $S^* = \arg\max_{S\in\mathcal{S}} W^*(S)$ and the implemented optimal bandwidth allocation is $\b^* = \b^*(S^*)$.

Because there are $K$ clients and in every iteration, one more client is added into $S$, we ensure that the algorithm only needs to solve at most $K$ optimization problems \textbf{P4} to return the optimal solution. In fact, we can reduce the number of times for solving \textbf{P4} by adding a termination condition: if for some $S$, its optimal bandwidth allocation $\b^*(S)$ results in $\eta^t - \rho_k N_0  \tilde{\tau} B b_k\left(2^\frac{L}{\tilde{\tau} B b^t_k}-1\right) < 0$ for the last added client $k$, then the algorithm stops adding more clients into the selection set. This termination condition can significantly reduce the number of convex optimization problems to be solved when $K$ is large. The pesudocode of OCEAN-P is given in Algorithm \ref{alg:CSBA}. Next, we first prove that \textbf{P3} is a convex optimization problem and then prove the optimality of OCEAN-P.

\begin{algorithm}[t]
	\caption{OCEAN-P}
	\begin{algorithmic}[1]
        \State \textbf{Input}: $q_k(0) = 0, \forall k$
        \State Rank the clients according to $\rho$. Hence we have $\rho_1\leq \rho_2 \leq ... \leq \rho_K$
        \State Set $S^0 = \{k: \rho_k = 0\}$, $S = S^0$, and $\mathcal{S} = \{S^0\}$.
        \For {$k = |S^0|+1, ..., K$}
            \State Update $S = S\cup\{k\}$
            \State Solve \textbf{P4} and obtain $\b^*(S)$ and $W^*(S)$
            \If {$V\eta^t - \rho_k N_0  \tilde{\tau} B b^t_k\left(2^\frac{L}{\tilde{\tau} B b_k}-1\right) < 0$}
                \State Stop iteration
            \Else {}
                \State Add $S$ to $\mathcal{S}$, i.e. $\mathcal{S} = \mathcal{S} \cup \{S\}$
            \EndIf
        \EndFor
        \State Find $S^* = \arg\max_{S\in\mathcal{S}} W^*(S)$
        \State Return $\a^*$ where $a^*_k = \mathbf{1}\{k \in S^*\}, \forall k$ and $\b^* = \b^*(S^*)$
	\end{algorithmic}\label{alg:CSBA}
\end{algorithm}

\begin{lemma}
The function $f(x) = x(2^\frac{\beta}{x} -1)$ where $\beta > 0$ is decreasing and convex in $x \in (0, \infty)$ and is increasing and concave in $x \in (-\infty, 0)$.
\end{lemma}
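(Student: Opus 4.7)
My plan is to prove both halves of the lemma by direct differentiation on each of the two intervals where $f$ is smooth. Writing $f(x) = x\,2^{\beta/x} - x$ and using $\frac{d}{dx} 2^{\beta/x} = -\frac{\beta \ln 2}{x^{2}}\, 2^{\beta/x}$, one obtains
\begin{equation*}
f'(x) \;=\; 2^{\beta/x}\Bigl(1 - \tfrac{\beta \ln 2}{x}\Bigr) - 1.
\end{equation*}
A second differentiation, after cancellation of the constant term and one of the two cross terms, collapses to the compact form
\begin{equation*}
f''(x) \;=\; \frac{\beta^{2}(\ln 2)^{2}}{x^{3}}\, 2^{\beta/x}.
\end{equation*}

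The convexity/concavity part then follows at once: since $\beta > 0$ and $2^{\beta/x} > 0$, the sign of $f''(x)$ matches that of $1/x^{3}$, which is positive on $(0,\infty)$ (so $f$ is convex there) and negative on $(-\infty,0)$ (so $f$ is concave there). For monotonicity I would introduce the substitution $u = \beta/x$ and the auxiliary function $\phi(u) := 2^{u}(1 - u \ln 2) - 1$, so that $f'(x) = \phi(\beta/x)$. A short calculation gives $\phi'(u) = -u(\ln 2)^{2}\, 2^{u}$ together with $\phi(0) = 0$, and these two facts pin down the sign of $\phi$ on each of $(-\infty,0)$ and $(0,\infty)$; transporting the conclusion back through $u = \beta/x$ yields the sign of $f'$ on each side of the origin, from which the claimed monotonicity on each interval follows.

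The step that warrants the most care is the sign bookkeeping in the $x < 0$ case: there $u = \beta/x$ is negative, so the direction of the inequality coming out of $\phi'$ must be tracked carefully through the composition $f'(x) = \phi(\beta/x)$, where the inner map $x \mapsto \beta/x$ is itself orientation-reversing. Apart from that one bookkeeping step, both halves reduce to routine single-variable calculus, and combining the monotonicity conclusions with the convexity/concavity conclusions above completes the proof.
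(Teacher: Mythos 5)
Your derivatives are correct and identical to the paper's, and your convexity/concavity argument (reading off the sign of $f''(x) = \beta^2(\ln 2)^2\,2^{\beta/x}/x^3$ from the sign of $x^3$) is exactly the paper's. On $(0,\infty)$ your monotonicity argument also goes through and is essentially the paper's in different clothing: the paper shows $f'$ is increasing on $(0,\infty)$ with $\lim_{x\to\infty}f'(x)=0$, hence $f'<0$ there; your $\phi(u)=2^{u}(1-u\ln 2)-1$ with $\phi(0)=0$ and $\phi'(u)=-u(\ln 2)^2 2^{u}<0$ for $u>0$ says the same thing under the substitution $u=\beta/x$.

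The gap is precisely the step you deferred, the ``sign bookkeeping in the $x<0$ case'' --- and it cannot be repaired, because the claim it is supposed to establish is false. Your own computation shows $\phi'(u)>0$ for $u<0$ and $\phi(0)=0$, so $\phi$ increases up to the value $0$ at $u=0$; hence $\phi(u)<0$ for $u<0$ just as for $u>0$ (indeed $u=0$ is the global maximum of $\phi$). Since $x<0$ maps to $u=\beta/x<0$, this gives $f'(x)=\phi(\beta/x)<0$ on $(-\infty,0)$ as well: $f$ is \emph{decreasing} there, not increasing. A numerical check with $\beta=1$ confirms this: $f(-2)\approx 0.586$, $f(-1)=0.5$, $f(-0.5)=0.375$. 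So ``transporting the conclusion back through $u=\beta/x$'' yields the opposite of ``the claimed monotonicity'' on the negative half-line, and asserting that it follows is the error. For what it is worth, the paper's own proof commits the mirror-image mistake: from ``$f'$ is decreasing on $(-\infty,0)$ and $\lim_{x\to-\infty}f'(x)=0$'' it concludes $f'>0$, when a decreasing function whose limit at $-\infty$ is $0$ must be negative on $(-\infty,0)$. The lemma as stated is therefore wrong about monotonicity on $(-\infty,0)$ (the concavity claim there is fine). None of this affects the rest of the paper, which only ever applies the lemma for $x=b_k^t\in[b_{\min},1)\subset(0,\infty)$; but your proof should either correct the statement for $x<0$ or restrict attention to $x>0$, rather than claim the deferred bookkeeping delivers the stated conclusion.
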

\begin{proof}
See Appendix \ref{proofLemma1}.
\end{proof}

Lemma 1 readily proves that \textbf{P4} is a convex optimization problem as $b^t_k \in [b_{min}, 1)$. Convex optimization problems are extensively studied in the literature and many efficient algorithms \cite{boyd2004convex} and mature software tools (such as CVX \cite{grant2014cvx} and SciPy \cite{virtanen2020scipy}) exist. Next, we prove that our algorithm returns the optimal solution by solving \textbf{P4} at most $K$ times.

\begin{theorem}
OCEAN-P returns the optimal solution to the per-round problem \textbf{P3} by solving at most $K$ convex optimization problems.
\end{theorem}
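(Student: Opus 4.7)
My plan is to reduce \textbf{P3} to enumerating a linear-size family of candidate selection sets and, for each one, solving a convex bandwidth-allocation subproblem; Lemma~1 then guarantees convexity of the subproblem. First, I would fix any selection $S=\{k:a_k^t=1\}$ and observe that the bandwidth allocation decomposes. Unselected clients get $b_k^t=0$. Selected clients in $S^0\triangleq\{k\in S:\rho_k=0\}$ contribute $V\eta^t$ each independently of their bandwidth, so giving them anything beyond $b_{\min}$ would only steal bandwidth from the positive-$\rho$ clients and strictly increase their energy cost $\rho_k g(b_k)$ with $g(b)\triangleq \bar{\tau} N_0 B b(2^{L/\bar{\tau}Bb}-1)$; hence each $k\in S^0$ receives exactly $b_{\min}$. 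The residual $1-|S^0|b_{\min}$ is distributed among $S\setminus S^0$ by \textbf{P4}, which by Lemma~1 maximizes a sum of concave terms $V\eta^t-\rho_k g(b_k)$ under linear constraints and is therefore a convex program.

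The key step is an exchange argument showing that the optimal $S^*$ must be an initial segment of the ascending-$\rho$ order. (i) $S^0\subseteq S^*$: if some $i\in S^0$ were absent, I would either swap it in for a positive-$\rho$ client $j\in S^*$ (setting $b_i:=b_j^*$), which raises the objective by $\rho_j g(b_j^*)>0$, or, if $S^*$ contains no positive-$\rho$ client, simply insert $i$ with $b_{\min}$ and trim the bandwidth of some over-served zero-$\rho$ client, which is feasible because $|S^0|b_{\min}\leq 1$ and raises the objective by $V\eta^t$. (ii) Among positive-$\rho$ clients, if $\rho_i<\rho_j$, $j\in S^*$, $i\notin S^*$, the swap $S'=(S^*\setminus\{j\})\cup\{i\}$ with $b_i:=b_j^*$ keeps feasibility and raises the objective by $(\rho_j-\rho_i)g(b_j^*)>0$, contradicting optimality of $S^*$. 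These two observations force $S^*=S^0\cup\{\text{the }m\text{ smallest-}\rho\text{ positive clients}\}$ for some $0\leq m\leq K-|S^0|$.

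With the structure in hand, optimality of OCEAN-P is immediate: it enumerates exactly these at most $K-|S^0|+1\leq K$ candidates, solves the convex \textbf{P4} on each, and returns the maximizer, giving both optimality and the stated $O(K)$ convex-solve bound. The last item is to justify the early-termination rule, which I expect to be the most delicate step. If the last added client $k$ satisfies $V\eta^t-\rho_k g(b_k^*)<0$ at the current $\b^*(S)$, reassigning $b_k^*$ to the remaining selected clients gives $W^*(S\setminus\{k\})>W^*(S)$; extending this inductively—using convexity and monotonicity of $g$ together with $\rho_{k+1}\geq\rho_k$ (so any further extension faces both a larger priority weight and a smaller per-client bandwidth budget)—shows that any strictly larger set is dominated by some already-recorded candidate, so nothing is missed by stopping.
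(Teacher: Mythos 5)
Your proposal is correct and follows essentially the same route as the paper's own proof: an exchange/swap argument establishing that the optimal selection is an initial segment in ascending $\rho$ order (so only the $\leq K$ nested candidate sets need be enumerated, each via the convex program \textbf{P4} whose convexity follows from Lemma~1), plus a verification that the early-termination rule discards only dominated sets. Your treatment is, if anything, slightly more explicit than the paper's on why $S^0\subseteq S^*$ and on the inductive step behind the termination rule.
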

\begin{proof}
See Appendix \ref{proofTheorem1}.
\end{proof}

\subsection{Structural Results and Performance Analysis}
In this subsection, we first investigate the structure of the optimal solution produced by OCEAN-P in every round, and then characterize the performance of OCEAN.

In Theorem 1, we have already proven a thresholding result on the client selection, namely only clients whose selection priority $\rho_k$ is below a threshold are selected to participate in a FL round. Proposition 1 characterizes how bandwidth is allocated among the selected clients and their incurred energy consumption.

\begin{proposition}
In any learning round $t$, the allocated bandwidth $b^{t,*}_k$ of a selected client $k$ and its weighted energy consumption $q_k(t)E^t_k(b^{t,*}_k)$ are non-decreasing with $\rho^t_k$. 
\end{proposition}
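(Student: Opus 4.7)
The plan is to analyze the KKT conditions of the convex problem \textbf{P4} and to exploit a log-convexity property of the per-client cost function $f(b) := \bar{\tau} N_0 B b (2^{L/(\bar{\tau} B b)} - 1)$. The key algebraic identity is $q_k(t)\, E^t_k(b_k) = \rho^t_k\, f(b_k)$, so the weighted energy is simply $\rho_k f(b_k^*)$. For clients in $S^0$ (those with $\rho^t_k = 0$) both claims are trivial, since OCEAN-P assigns them the minimum bandwidth $b_{min}$ and their weighted energy equals $0$; the argument therefore focuses on clients in the optimal set $S^* - S^0$.

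For part (a), I would write out the KKT conditions of \textbf{P4}: with $\lambda > 0$ the multiplier of the bandwidth equality and $\nu_k \geq 0$ the multiplier of $b_k \geq b_{min}$, stationarity reads $\rho_k f'(b_k^*) = \nu_k - \lambda$. Lemma 1 gives that $f$ is convex and decreasing, hence $f'$ is negative and increasing while $|f'|$ is positive and decreasing. Interior clients ($\nu_k = 0$) therefore satisfy $\rho_k = \lambda/|f'(b_k^*)|$, so for two interior clients with $\rho_{k_1} \leq \rho_{k_2}$ one has $|f'(b_{k_1}^*)| \geq |f'(b_{k_2}^*)|$ and thus $b_{k_1}^* \leq b_{k_2}^*$. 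Boundary clients satisfy $\rho_k \leq \lambda/|f'(b_{min})|$, which is precisely the threshold below which the interior equation has no solution in $(b_{min}, \infty)$, so every client with $\rho_k$ below this threshold sits at $b_{min}$ and every client above it sits strictly above $b_{min}$. Combining the cases yields $b_k^*$ non-decreasing in $\rho_k$.

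For part (b), I substitute the interior KKT identity to obtain $\rho_k f(b_k^*) = \lambda \cdot \bigl(-f(b_k^*)/f'(b_k^*)\bigr)$. Since $b_k^*$ is non-decreasing in $\rho_k$ by (a), it suffices to show that $b \mapsto -f(b)/f'(b)$ is non-decreasing on $[b_{min}, \infty)$, and direct differentiation shows this is equivalent to $f(b) f''(b) \geq (f'(b))^2$, i.e., the log-convexity of $f$. The cross case in which $k_1$ is on the boundary and $k_2$ is interior is handled by the chain $\rho_{k_1} f(b_{min}) \leq \lambda \cdot \bigl(-f(b_{min})/f'(b_{min})\bigr) \leq \lambda \cdot \bigl(-f(b_{k_2}^*)/f'(b_{k_2}^*)\bigr) = \rho_{k_2} f(b_{k_2}^*)$, where the first inequality is the boundary KKT condition on $\rho_{k_1}$ and the second uses the monotonicity of $-f/f'$ together with $b_{min} \leq b_{k_2}^*$.

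The main obstacle is verifying the log-convexity of $f$. My plan is to use the Taylor expansion $2^{L/(\bar{\tau}Bb)} - 1 = \sum_{n=1}^\infty \bigl(L \ln 2/(\bar{\tau}B)\bigr)^n /(n!\, b^n)$ to write $f(b) = \sum_{n=1}^\infty c_n/b^{n-1}$ with strictly positive coefficients $c_n$. Each summand is log-convex in $b$ since $\log(c_n/b^{n-1})$ has second derivative $(n-1)/b^2 \geq 0$. Invoking the classical fact that a pointwise sum of log-convex functions is log-convex (a consequence of H\"older's inequality) establishes the log-convexity of $f$ and closes the argument.
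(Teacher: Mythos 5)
Your proof is correct, but it takes a genuinely different route from the paper's on both halves of the claim. For the monotonicity of $b^{t,*}_k$ in $\rho^t_k$, the paper uses an exchange argument: it reduces to a two-client subproblem, assumes $b^*_1 > b^*_2$ with $\rho_1 < \rho_2$, swaps the two allocations, and derives a contradiction from Lemma 1. You instead read the ordering directly off the KKT stationarity condition $\rho_k \lvert f'(b^*_k)\rvert = \lambda$ together with the monotonicity of $\lvert f'\rvert$, which has the added benefit of cleanly characterizing which clients are pinned at $b_{min}$ (those with $\rho_k \le \lambda/\lvert f'(b_{min})\rvert$), whereas the paper handles the $b_{min}$ constraint by an ad hoc case split at the end. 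For the second half, both arguments funnel into the same key inequality $f(b)f''(b)\ge (f'(b))^2$ --- the paper phrases it as $g_2=(f')^2-ff''\le 0$ and proves it by brute-force differentiation, the change of variables $y=\beta/x$, and a sign analysis of $g_2'(y)$; you obtain it structurally as log-convexity of $f$ by expanding $b(2^{\beta/b}-1)=\sum_{n\ge 1}c_n b^{-(n-1)}$ with $c_n>0$ and using closure of log-convexity under positive sums. Your route is more conceptual and generalizes to any cost of the form $b\,g(1/b)$ with $g$ having nonnegative Taylor coefficients, at the price of invoking the H\"older-based closure property (and, strictly, its extension to pointwise limits, since your sum is an infinite series); the paper's computation is elementary and self-contained. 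Two one-line points worth adding to make yours airtight: $\lambda>0$ follows automatically whenever at least one client is interior (the all-boundary case being trivial), and the comparison between two clients both at $b_{min}$ reduces to $\rho_{k_1}f(b_{min})\le\rho_{k_2}f(b_{min})$, which is immediate.
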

\begin{proof}
See Appendix \ref{proofProposition1}.
\end{proof}

Theorem 1 and Proposition 1 together show that a client with a smaller energy deficit $q_k(t)$ and a better channel condition $(h_k^t)^2$ (and hence a smaller $\rho_k^t$) is more likely to be selected to participate in the current FL round; however, among the selected clients, a client with a smaller $\rho^t_k$ is allocated with less bandwidth. This is because although allocating more bandwidth to client $k$ with a smaller $\rho_k^t$ reduces the energy consumption and deficit of this client, it reduces the bandwidth that can be allocated to clients with larger $\rho$, which leads to even higher increased energy consumption and deficit of those clients. Moreover, in the optimal solution, the overall effect of energy deficit and consumption, namely $q_k(t-1) E^t_k(b^*_k)$, is still increasing in $\rho^t_k$.

With the optimality of OCEAN-P, we then prove the performance guarantee of OCEAN.

\begin{theorem}
For any $R \in \mathbb{Z}^+$ and $M\in\mathbb{Z}^+$ such that $T = MR$, when comparing OCEAN with the $R$-round lookahead algorithm, the following statements hold:

(a) The energy constraint of every client $k$ is approximately satisfied with a bounded deviation:
\begin{align}
\sum_{t=0}^T E_k(a^t_k, b^t_k|h^t_k) \leq H_k + \sum_{m=0}^{M-1} \sqrt{\frac{2(V_m\eta^t K + C_1)}{R}}, \forall k
\end{align}
where $C_1 \triangleq K (E^\text{max} - H^{\text{min}}/T)^2/2$.

(b) The federated learning performance satisfies:
\begin{align}
\sum_{t=0}^{T-1}U(\a^t) \geq \sum_{m=0}^{M-1} U^*_m - C_2 \sum_{m=0}^{M-1}\frac{1}{V_m}
\end{align}
where $C_2 \triangleq C_1 R + \frac{R(R-1)K}{2}(E^\text{max})^2$ and $U^*_m$ is the optimal value achieved by the $R$-round lookahead algorithm in frame $m$.
\end{theorem}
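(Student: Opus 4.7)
The plan is to apply the Lyapunov drift-plus-penalty framework, adapted to the frame structure of OCEAN in which the virtual queues are reset at the start of each of the $M$ frames. I would introduce the quadratic Lyapunov function $L(\q(t)) = \tfrac{1}{2}\sum_k q_k(t)^2$ and compute the per-round drift $\Delta(t) = L(\q(t+1)) - L(\q(t))$. Using the inequality $(\max\{x,0\})^2 \leq x^2$ applied to the queue update \eqref{queue}, together with the boundedness $E(a^t_k,b^t_k|h^t_k)\in[0,E^{\max}]$ and $H_k/T \geq H^{\min}/T$, this yields the standard one-step bound $\Delta(t) \leq C_1 + \sum_k q_k(t)\bigl(E(a^t_k,b^t_k|h^t_k) - H_k/T\bigr)$. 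Subtracting $V_m U^t(\a^t)$ produces the drift-plus-penalty expression whose $\a^t,\b^t$-dependent part is precisely what \textbf{P3} minimizes; hence, for any alternative feasible policy $(\tilde{\a}^t,\tilde{\b}^t)$ in round $t$, the drift-plus-penalty of OCEAN is at most that of the alternative.

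For part (b), I would plug in the optimal $R$-round lookahead decisions $(\a^{t,*},\b^{t,*})$ as the alternative within frame $m$. Summing the inequalities over $t=mR,\ldots,(m+1)R-1$ telescopes the drift down to $L(\q((m+1)R))\geq 0$ because $L(\q(mR))=0$ after the reset, giving
\begin{equation}
V_m\sum_{t=mR}^{(m+1)R-1}\!\!\bigl[U^t(\a^{t,*})-U^t(\a^t)\bigr] \leq R\,C_1 + \sum_{t=mR}^{(m+1)R-1}\sum_k q_k(t)\bigl(E_k^{t,*}-H_k/T\bigr).\nonumber
\end{equation}
To bound the cross term, I would drop the nonpositive $-q_k(t)H_k/T$ contributions and use the crude but tight-in-$R$ estimate $q_k(t)\leq (t-mR)E^{\max}$ (since a reset queue grows by at most $E^{\max}$ per step) together with $E_k^{t,*}\leq E^{\max}$, obtaining $\tfrac{KR(R-1)}{2}(E^{\max})^2$. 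Combining these two constants gives exactly $C_2$, so dividing by $V_m$ and summing over $m$ yields the claimed performance bound.

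For part (a), I would repeat the drift-plus-penalty argument but take the alternative to be the trivial ``select no clients'' policy, which is feasible with $E_k^{t,*}=0$ and $U^t=0$. The same telescoping produces $L(\q((m+1)R))\leq R\bigl(C_1 + V_m\eta^t K\bigr)$, and since $\tfrac12 q_k^2((m+1)R)\leq L(\q((m+1)R))$, taking square roots bounds each terminal queue length by an expression of the form $\sqrt{2(V_m\eta^t K + C_1)/R}$ (after the appropriate per-round normalization used in the paper). Unrolling the queue recursion \eqref{queue} over the frame gives $\sum_{t=mR}^{(m+1)R-1}E(a^t_k,b^t_k|h^t_k)\leq H_k/M + q_k((m+1)R)$, and summing across the $M$ frames telescopes $H_k/M$ into $H_k$ and leaves the sum of square-root terms as the deviation.

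The main obstacle I anticipate is the cross term $\sum_t\sum_k q_k(t)E_k^{t,*}$ in part (b): unlike the classical stationary Lyapunov setting where a throughput-optimal randomized policy would make this term telescope cleanly, the lookahead comparator here only obeys a frame-level energy budget and the within-frame queue lengths are themselves data-dependent. The fix is the worst-case pathwise bound $q_k(t)\leq (t-mR)E^{\max}$, which costs the quadratic-in-$R$ factor appearing in $C_2$ and is precisely what explains why longer lookahead windows incur a proportionally larger optimality gap. Everything else reduces to bookkeeping with nonnegativity of $L$, $U^t\leq \eta^t K$, and the definition of $C_1$.
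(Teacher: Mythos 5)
Your proposal is correct and follows essentially the same route as the paper's proof: the same quadratic Lyapunov function and one-step drift bound with constant $C_1$, the null policy as comparator for part (a) followed by unrolling the queue recursion over each frame, and the $R$-round lookahead comparator for part (b) with the pathwise bound $q_k(t)\leq (t-mR)E^{\max}$ after the frame reset yielding the $\frac{R(R-1)K}{2}(E^{\max})^2$ term in $C_2$. No gaps; the argument matches the paper's in both structure and constants.
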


\begin{proof}
See Appendix \ref{proofTheorem2}.
\end{proof}

Theorem 2 shows that, given a fixed value of $R$ and $M$, OCEAN is $O(1/V)$-optimal with respect to the FL performance against the optimal $R$-lookahead policy, while the energy consumption is guaranteed to be approximately satisfied with a bounded factor $O(\sqrt{V})$. Thus, OCEAN demonstrates an $[O(1/V), O(\sqrt{V})]$ learning-energy tradeoff. Note that when $R = T$, the $T$-lookahead benchmark has complete future information of the entire $T$ rounds. Even in this case, the $[O(1/V), O(\sqrt{V})]$ tradeoff still holds.

\section{Simulation Results}
In this section, we simulate a WFLN to evaluate the performance of OCEAN.

\textbf{Federated Dataset}. To simulate FL, we leverage the TensorFlow Federated (TFF) framework and the MNIST dataset for hand-written digit classification. Each client's local dataset is keyed by the original writer of the digits. Since each writer has a unique style, this dataset exhibits the kind of non-i.i.d. behavior expected of federated datasets. We use the first 10 clients in the MNIST dataset to conduct our simulation with each client having 100 training data samples. Since the hand-written digit classification is a relatively easy image classification task, we follow TFF's tutorial to construct a simple three-layer neural network with the first layer being input, the second containing 10 neurons and the third performing the softmax operation. This neural network's model size is $L = 3.4 \times 10^5$ bits. FedAvg \cite{konevcny2016federated} is used as the learning algorithm.

\textbf{Wireless Network}. To simulate the wireless network, we consider an OFDMA system where the total bandwidth $B = 10$ MHz. Each client's wireless channel gain is modelled as independent free-space fading with average path loss 36dB. The variance of the complex white Gaussian channel noise is set as $N_0 = 10^{-12}$ W. To ensure timely model update, we set the target uploading time in each round to be $\bar{\tau} = 300$ ms. The minimal bandwidth $b_{min}$ is set as $2 \times 10^5$ Hz. For each client $k$, the energy budget is set as $H_k = 0.15$ J. The network runs for $T = 300$ rounds.

\subsection{Benchmarks}
We compare the performance of OCEAN with the following three benchmark algorithms.
\begin{itemize}
  \item \textbf{Select-All}: All 10 clients are selected in every learning round. Bandwidth is allocated to minimize the total energy consumption while satisfying the upload deadline requirement. 
  \item \textbf{Static Myopic Optimal (SMO)}: In every learning round, SMO uses only currently available information independently across rounds (which is equivalent to the 1-Round Lookahead algorithm) to solve
\begin{align}
\max_{\a^t, \b^t} &~~~\sum_{k} a^t_k\\
\text{s.t.}&~~~E(a^t_k, b^t_k|h^t_k) \leq H_k/T, \forall k\\
&~~~\text{Constraints \eqref{con:band}, \eqref{con:selection}} \nonumber
\end{align}
  This problem is easy to solve: for each client $k$, first compute the required bandwidth $b^\dagger_k \geq b_{min}$ so that using $H_k/T$ energy can meet the upload time target $\bar{\tau}$; then rank $b^\dagger_k$ in the ascending order and select clients until the total required bandwidth exceeds $B$. SMO mimics existing approaches (e.g., \cite{zeng2019energy}) that solves bandwidth allocation and client selection independently across learning rounds.
  \item \textbf{Adaptive Myopic Optimal (AMO)}: SMO has a clear deficiency which can result in energy under-utilization: when a client is not selected in a round, its energy $H_k/T$ is wasted and will not be used in future rounds. To address this issue, we also consider a modified version of SMO, which recycles previously unused energy budget for future rounds. In particular, the energy budget for client $k$ in round $t$ is modified to $(H_k - \sum_{\tau = 0}^{t-1} E^t_k)/(T-t)$.
  \end{itemize}

For OCEAN, we let $R = T$ and hence the sequence $V_1, ..., V_M$ becomes a single scalar $V$. Moreover, we implement three variants using different temporal importance sequences $\eta_t$: Ascending (OCEAN-a); Descending (OCEAN-d); and Uniform (OCEAN-u).

\subsection{Performance Comparison}
Figure \ref{fig3} shows the number of selected clients in every round for different approaches, which is obtained by averaging over 10 runs. As the name suggests, Select-All selects all 10 clients in every round, resulting in the ideal optimal client selection for FL. SMO selects much fewer clients due to the hard energy budget allocation in every round. Many clients do not get to upload their local model updates due to the bad channel state that they are experiencing. AMO starts with selecting few clients due to the same reason as SMO. However, as time goes on, energy budget not used in the previous rounds accumulates. This allows the client to transmit at the desired rate using a higher transmission power in later rounds, especially in those towards the very end, thereby countering the effects of bad channel states. As a (fortunate) by-product, AMO also achieves an ascending pattern of client selection. Our proposed algorithm, OCEAN-a, is able to select many more clients than SMO because it uses energy as needed without imposing a hard per-round energy constraint. Compared to AMO, it is able to fine-tune the temporal pattern of client selection by using different sequences of temporal weights $\eta$. As can be seen in Figure \ref{fig13}, OCEAN-a results in an increasing number of selected clients, OCEAN-d results in a decreasing number of selected clients, while OCEAN-u keeps the number of selected clients almost the same across rounds.

\begin{figure}[htbp]
\centering
\begin{minipage}[t]{0.48\textwidth}
\centering
\includegraphics[width=8cm]{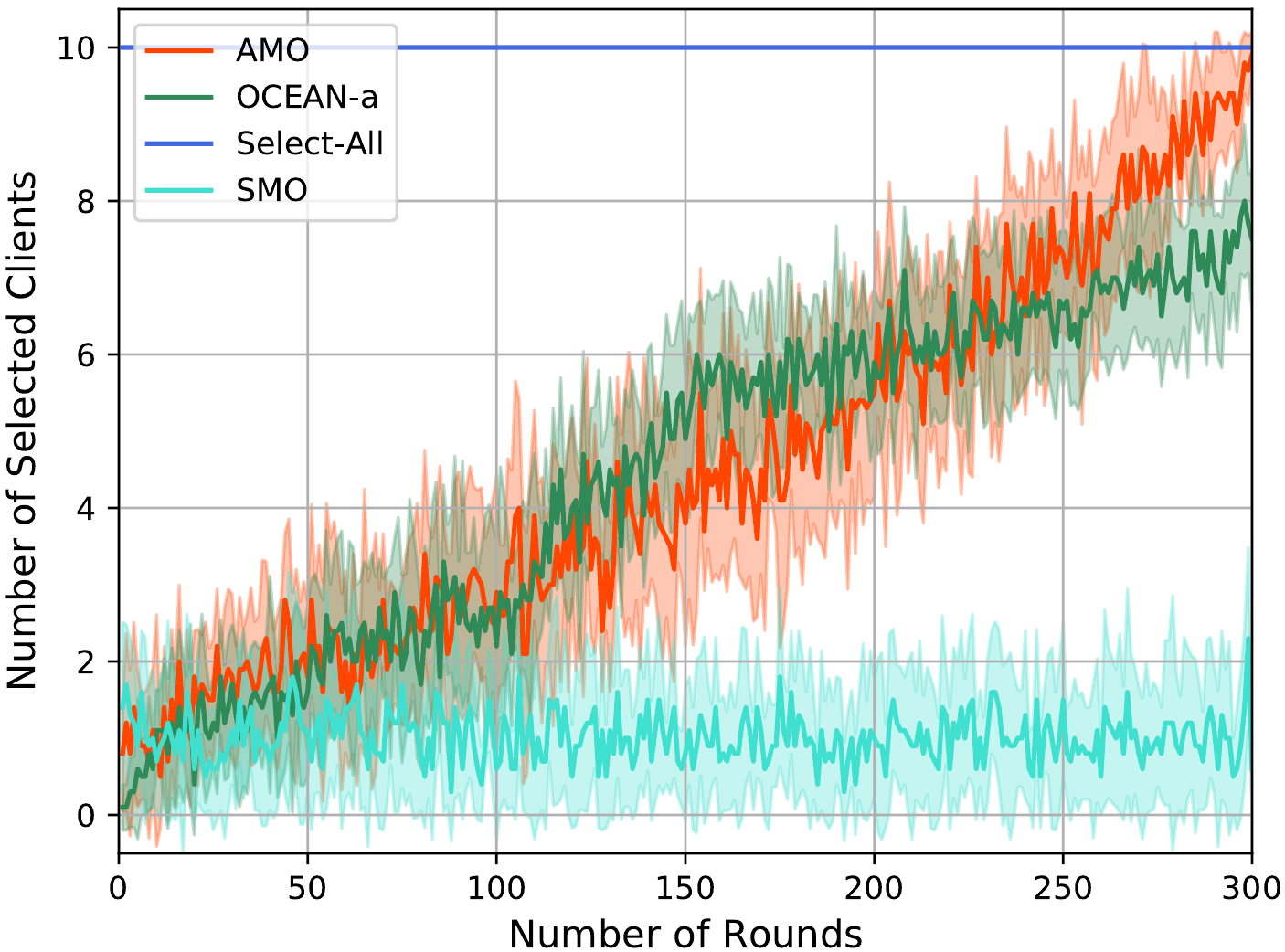}
\setlength{\abovecaptionskip}{0pt}
\caption{\label{fig3} Temporal Client Selection Patterns of OCEAN and Benchmarks}
\end{minipage}
\begin{minipage}[t]{0.48\textwidth}
\centering
\includegraphics[width=8cm]{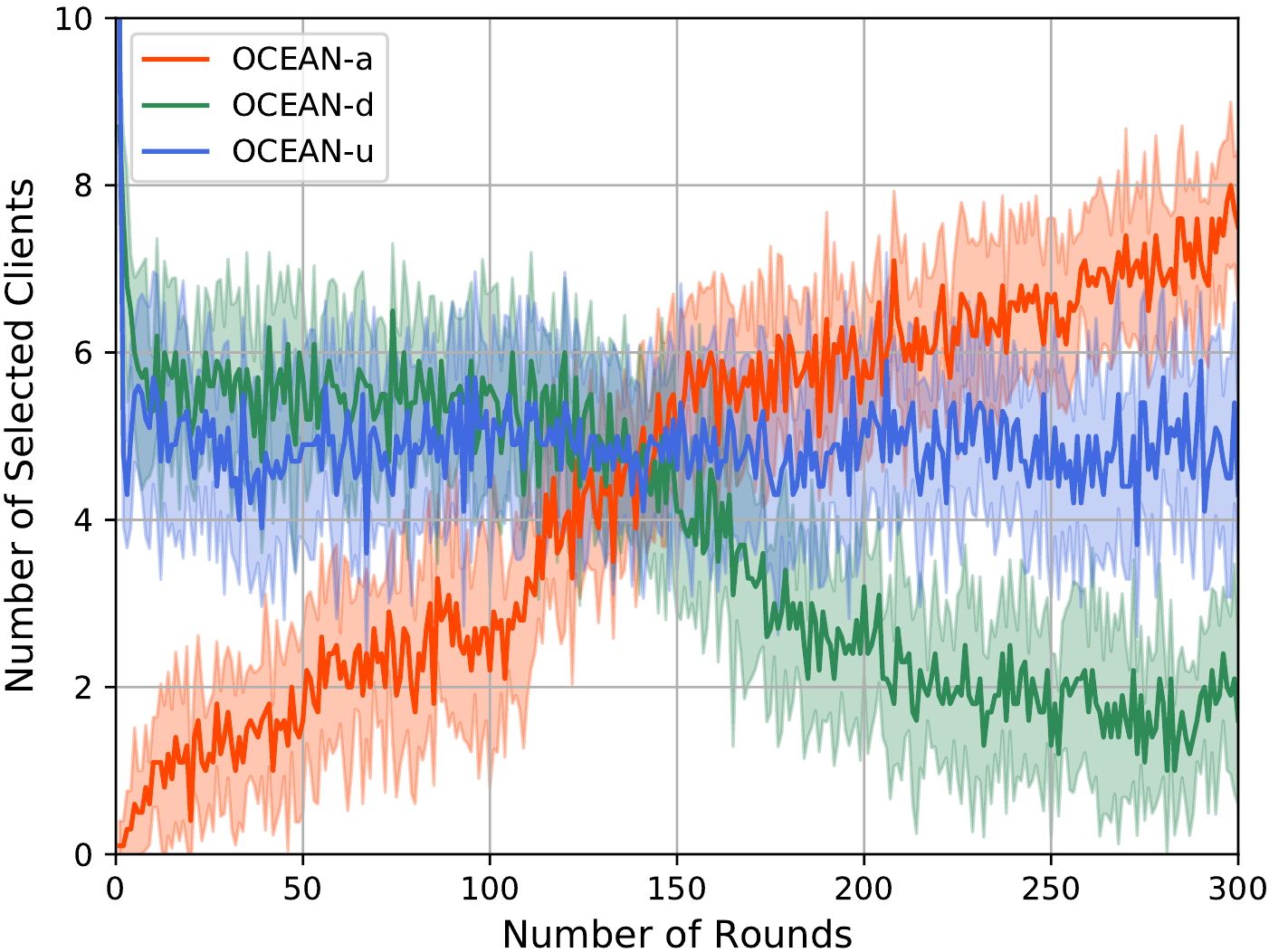}
\setlength{\abovecaptionskip}{0pt}
\caption{\label{fig13}Temporal Client Selection Patterns of OCEAN Variants}
\end{minipage}
\end{figure}

Figure \ref{fig4} shows the actual energy consumption of individual clients by the end of 300 learning rounds for different approaches in a particular run. Because Select-All completely ignores the energy budgets of the clients, it results in a very large energy consumption, far exceeding the energy budgets. On the other hand, SMO does not fully utilize the client's energy budget because in many learning rounds the client is not selected. Both AMO and OCEAN-a incur a total energy consumption close to the given energy budget (i.e. 0.15) for individual clients.

\begin{figure}[htbp]
\centering
\includegraphics[width=8cm]{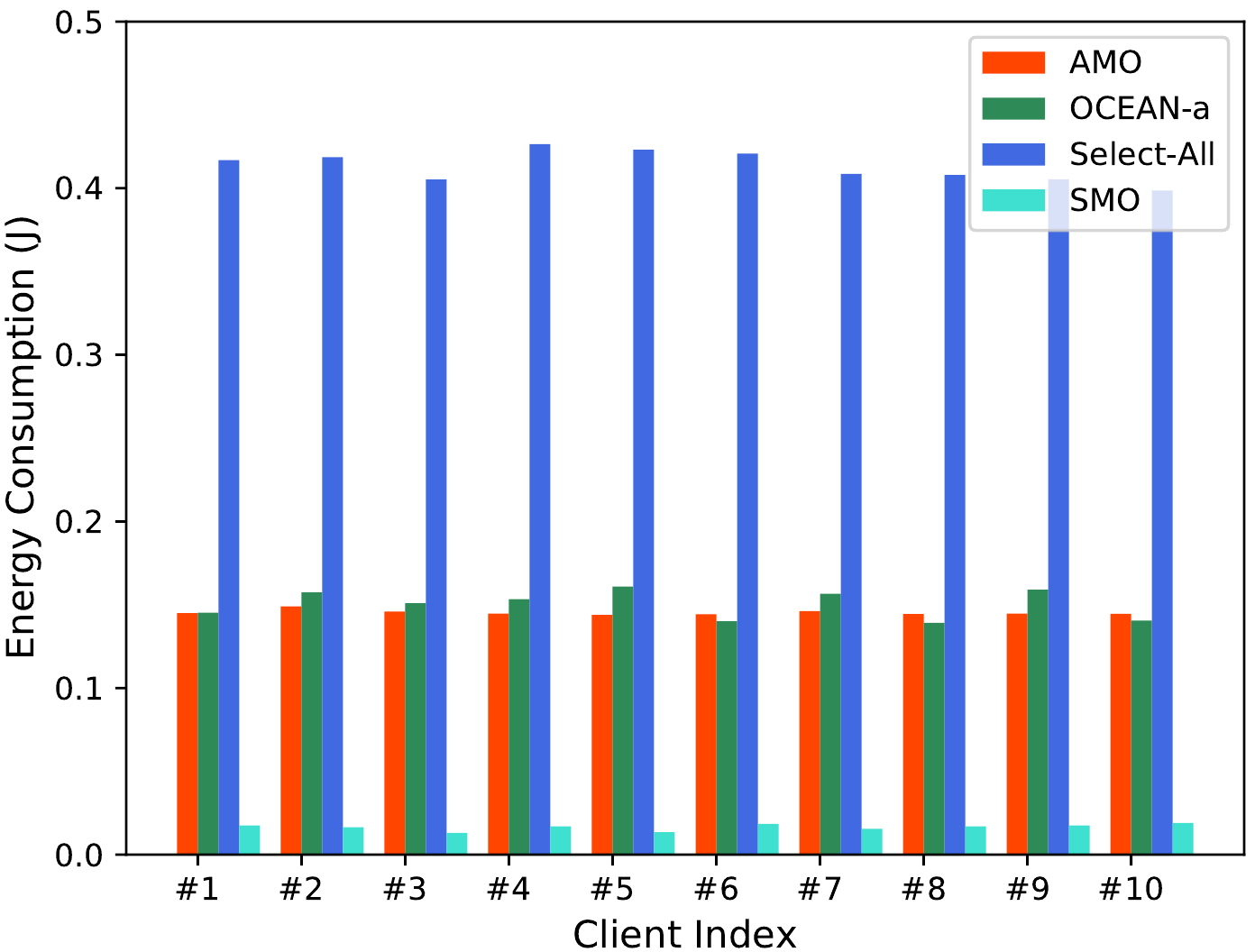}
\caption{\label{fig4}Per-Client Energy Consumption Comparison}
\end{figure}

As our ultimate goal is to improve the FL performance, we show the training loss and accuracy for different approaches in Figures \ref{fig5_1} and \ref{fig5_2}. Select-All, as expected, results in the best FL performance, with the smallest training loss, the highest accuracy and the fastest convergence among all approaches. Due to the insufficient selection of clients in the course of learning, SMO's learning performance is considerably inferior to all other approaches. Thanks to the fortunate by-product of AMO, AMO's FL performance is comparable to OCEAN-a in this specific setting, which is close to the ideal case Select-All. However, we will show in the next set of experiments that AMO's ``luck'' does not extend to other more complex network environments.

\begin{figure}[htbp]
\centering
\begin{minipage}[t]{0.48\textwidth}
\centering
\includegraphics[width=8cm]{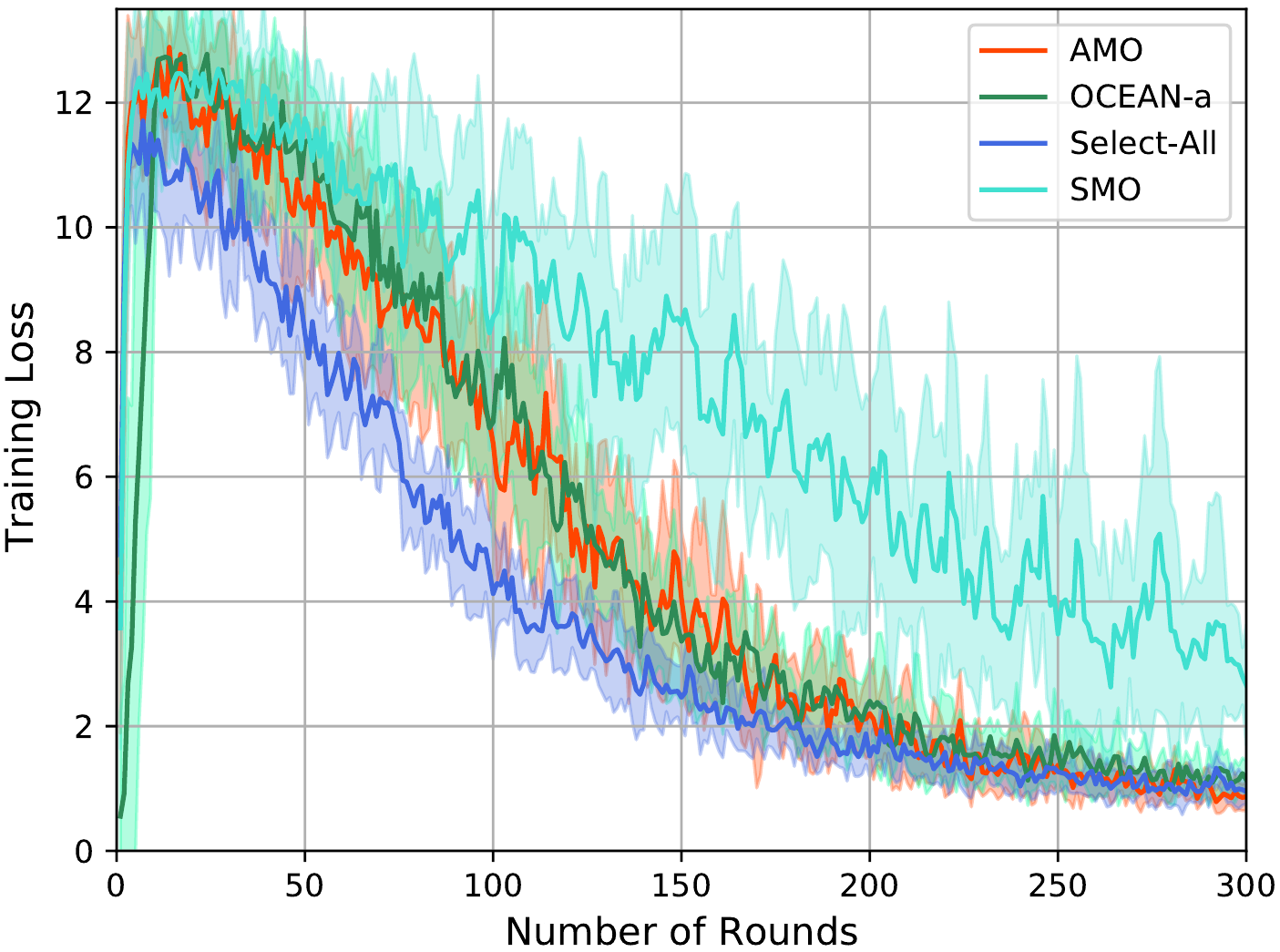}
\setlength{\abovecaptionskip}{0pt}
\caption{\label{fig5_1}Training Loss of OCEAN-a and Benchmarks}
\end{minipage}
\begin{minipage}[t]{0.48\textwidth}
\centering
\includegraphics[width=8cm]{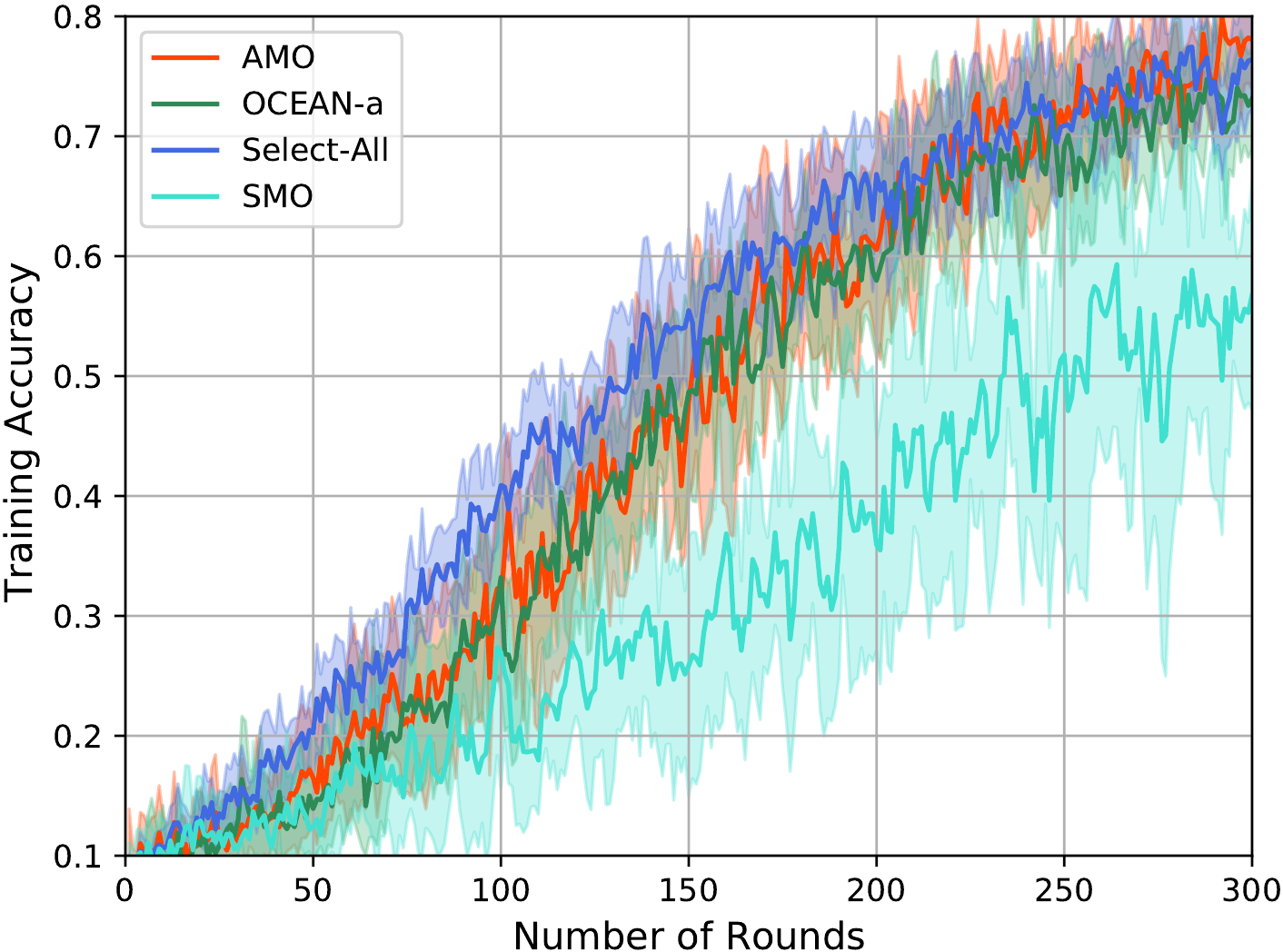}
\setlength{\abovecaptionskip}{0pt}
\caption{\label{fig5_2}Accuracy of OCEAN-a and Benchmarks}
\end{minipage}
\end{figure}

\subsection{Adaptability to Varying Network Condition}
Although the performance of AMO seems comparable to OCEAN-a in the last experiment, it is achieved in a relatively easy network, where the wireless channel is relatively stable. In this set of experiments, we simulate more challenging network environments where the wireless channel can vary considerably due to, e.g., client mobility. In particular, we simulate two scenarios. In Scenario 1, the average path loss gradually increases from 32 dB to 45 dB, mimicking a scenario where clients move away from the server over time. In Scenario 2, the average path loss gradually decreases from 45 dB to 32 dB, mimicking a scenario where clients move towards the server over time.

\underline{Scenario 1}. Figure \ref{fig6} shows the number of selected clients over 300 rounds for OCEAN-a and AMO and Figure \ref{fig7} shows the their FL accuracy. In the early rounds when the wireless channel is good, AMO selects some clients. However, as the channel gain degrades, AMO is not able to adapt to this change as the pre-allocated energy budget (even if the unused budget from the previous rounds is incorporated) cannot support even a single client to finish uploading the local model before the deadline $\bar{\tau}$. Only in the rounds towards the very end does the energy budget become sufficient and hence, some clients again are selected to upload their local model updates. Because of the long idle period in the middle when no clients are selected, the learning performance of AMO is significantly worse than OCEAN.

\underline{Scenario 2}.  Figure \ref{fig8} shows the number of selected clients over 300 rounds for OCEAN-a and AMO and Figure \ref{fig9} shows the their federated learning accuracy. In this scenario, the channel state in the early rounds is bad and hence, hardly any client can be selected to upload its local model update due to insufficient energy budget in AMO. As the channel state improves, AMO starts to select some clients but it becomes too late to do so.

In both scenarios, OCEAN is able to adapt its client selection decision because of its soft per-round energy budget allocation, yet the total consumed energy is still made close to the total energy budget. The per-client total energy consumption of OCEAN-a is shown in Figure \ref{fig10} for the two considered scenarios.

\begin{figure}[htbp]
\centering
\begin{minipage}[t]{0.48\textwidth}
\centering
\includegraphics[width=8cm]{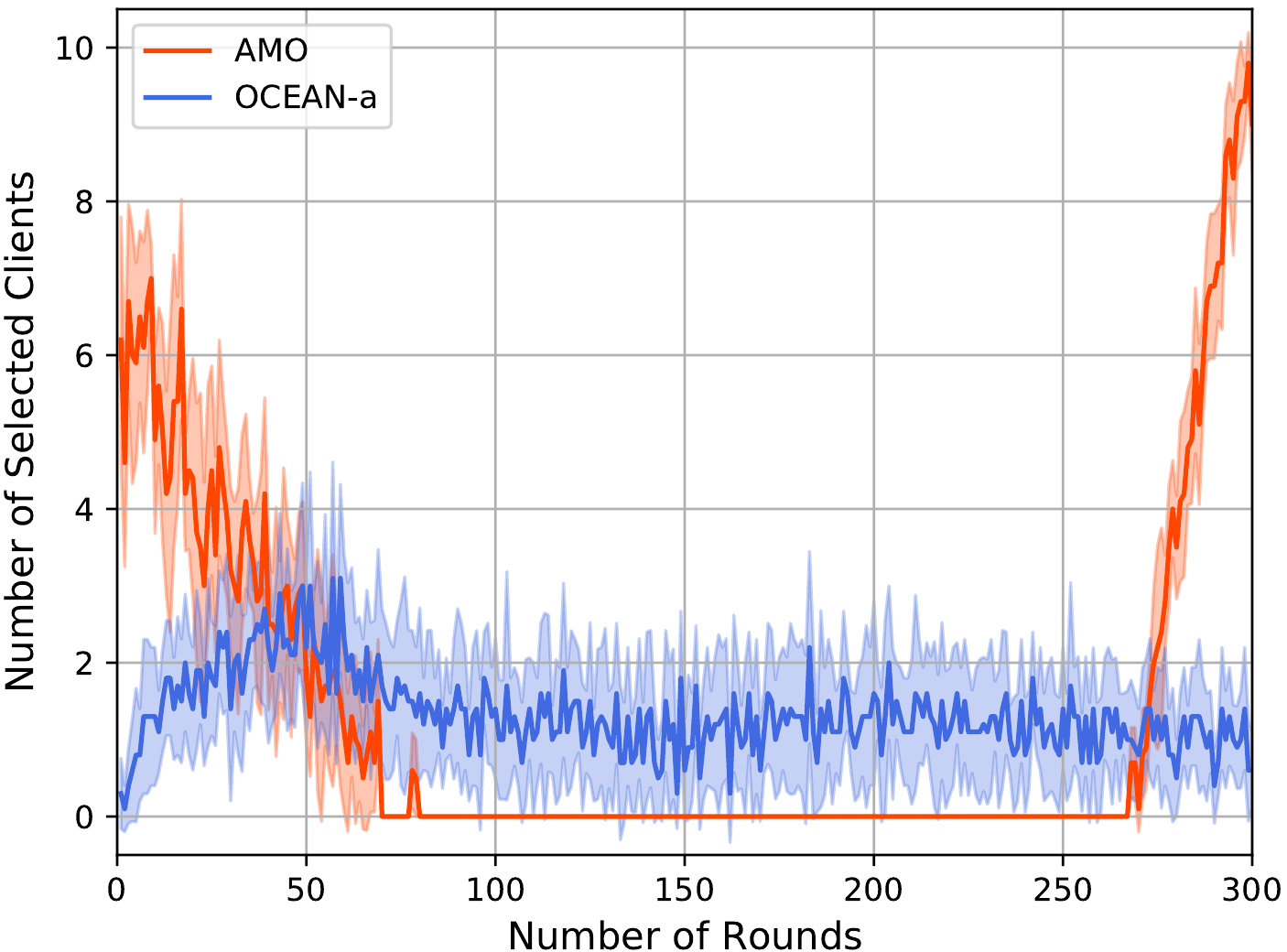}
\setlength{\abovecaptionskip}{0pt}
\caption{\label{fig6}Client Selection of Scenario 1}
\end{minipage}
\begin{minipage}[t]{0.48\textwidth}
\centering
\includegraphics[width=8cm]{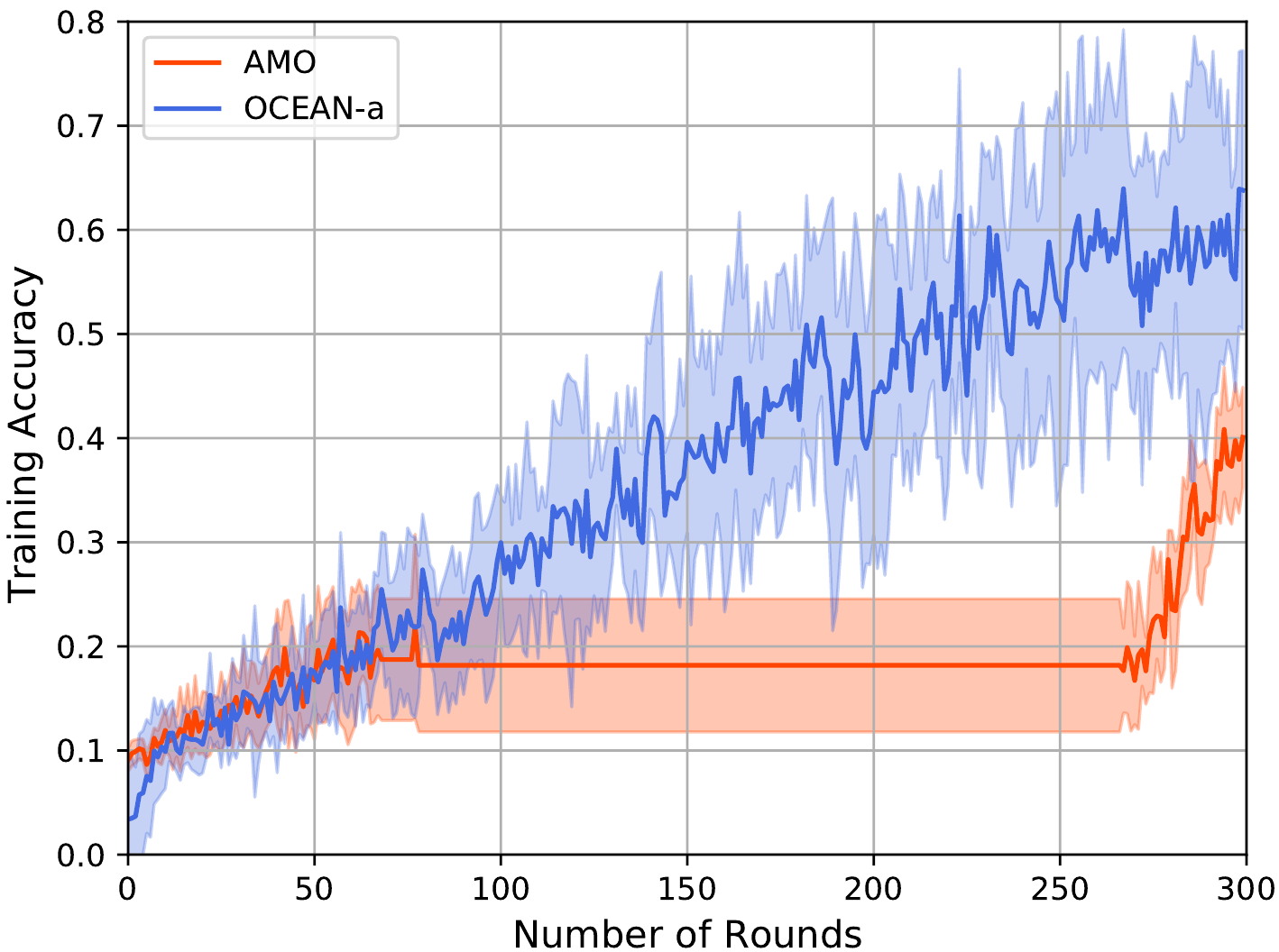}
\setlength{\abovecaptionskip}{0pt}
\caption{\label{fig7}Accuracy of Scenario 1}
\end{minipage}
\end{figure}

\begin{figure}[htbp]
\centering
\begin{minipage}[t]{0.48\textwidth}
\centering
\includegraphics[width=8cm]{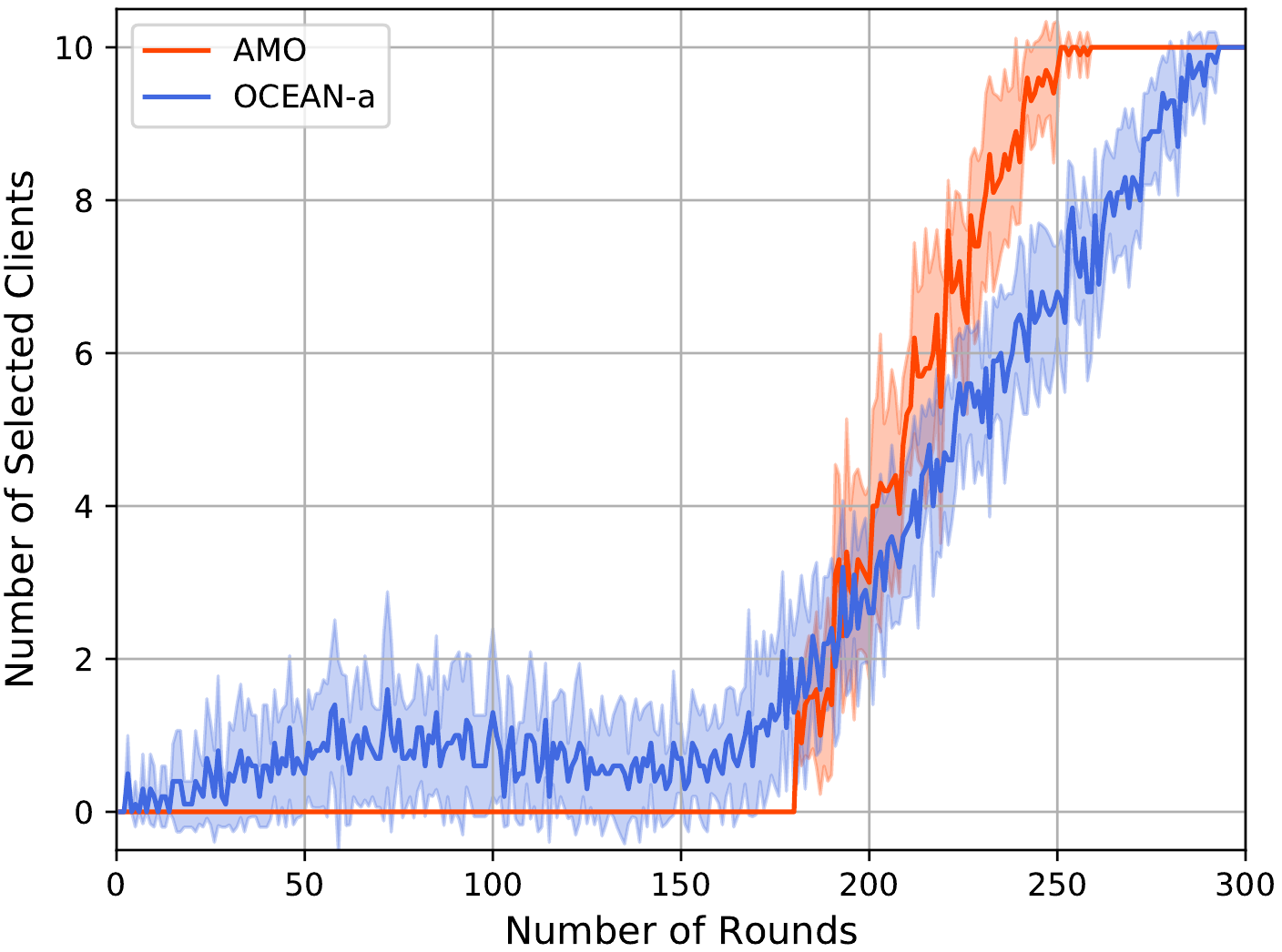}
\setlength{\abovecaptionskip}{0pt}
\caption{\label{fig8}Client Selection of Scenario 2}
\end{minipage}
\begin{minipage}[t]{0.48\textwidth}
\centering
\includegraphics[width=8cm]{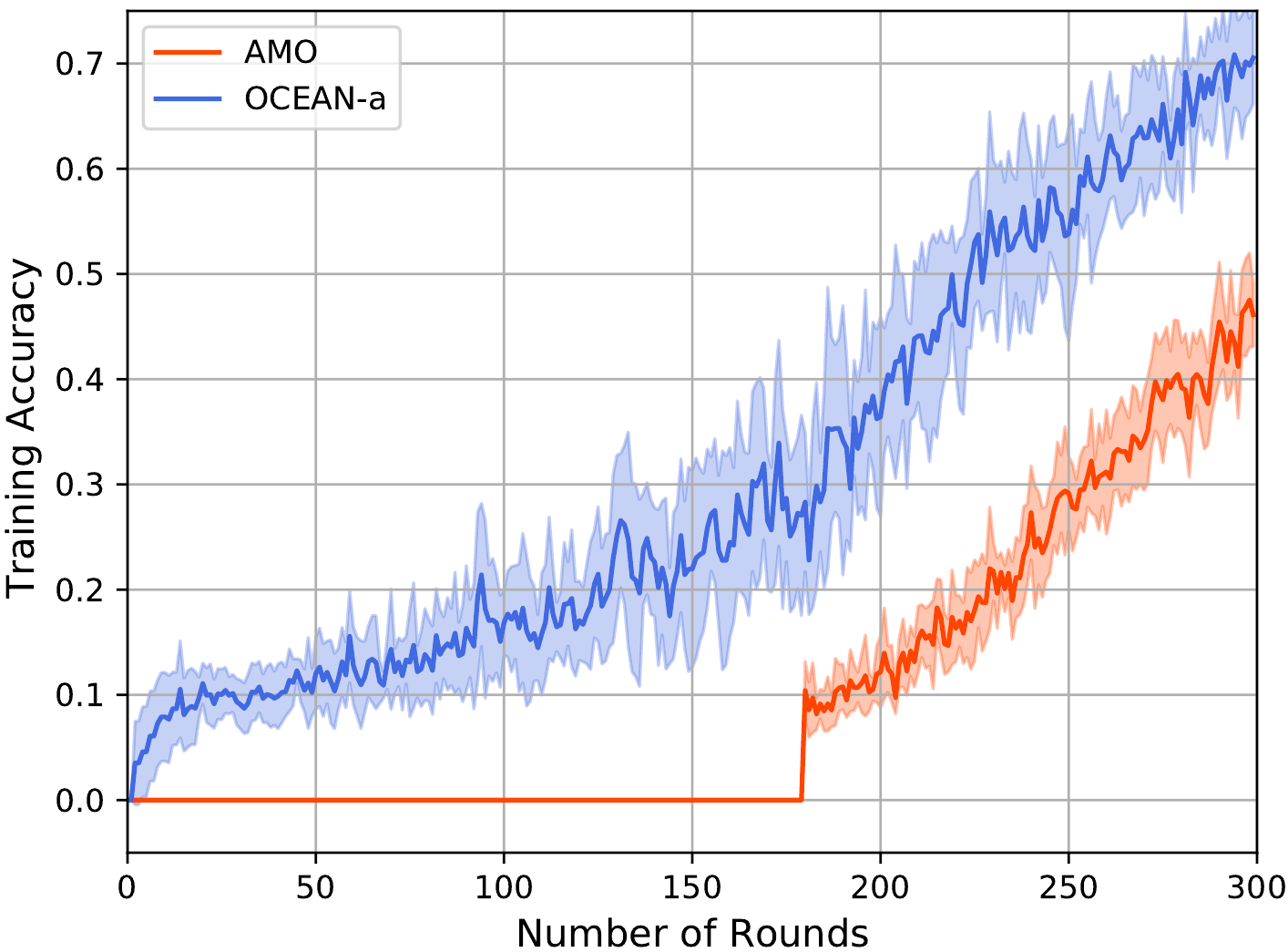}
\setlength{\abovecaptionskip}{0pt}
\caption{\label{fig9}Accuracy of Scenario 2}
\end{minipage}
\end{figure}

\begin{figure}[htbp]
\centering
\includegraphics[width=8cm]{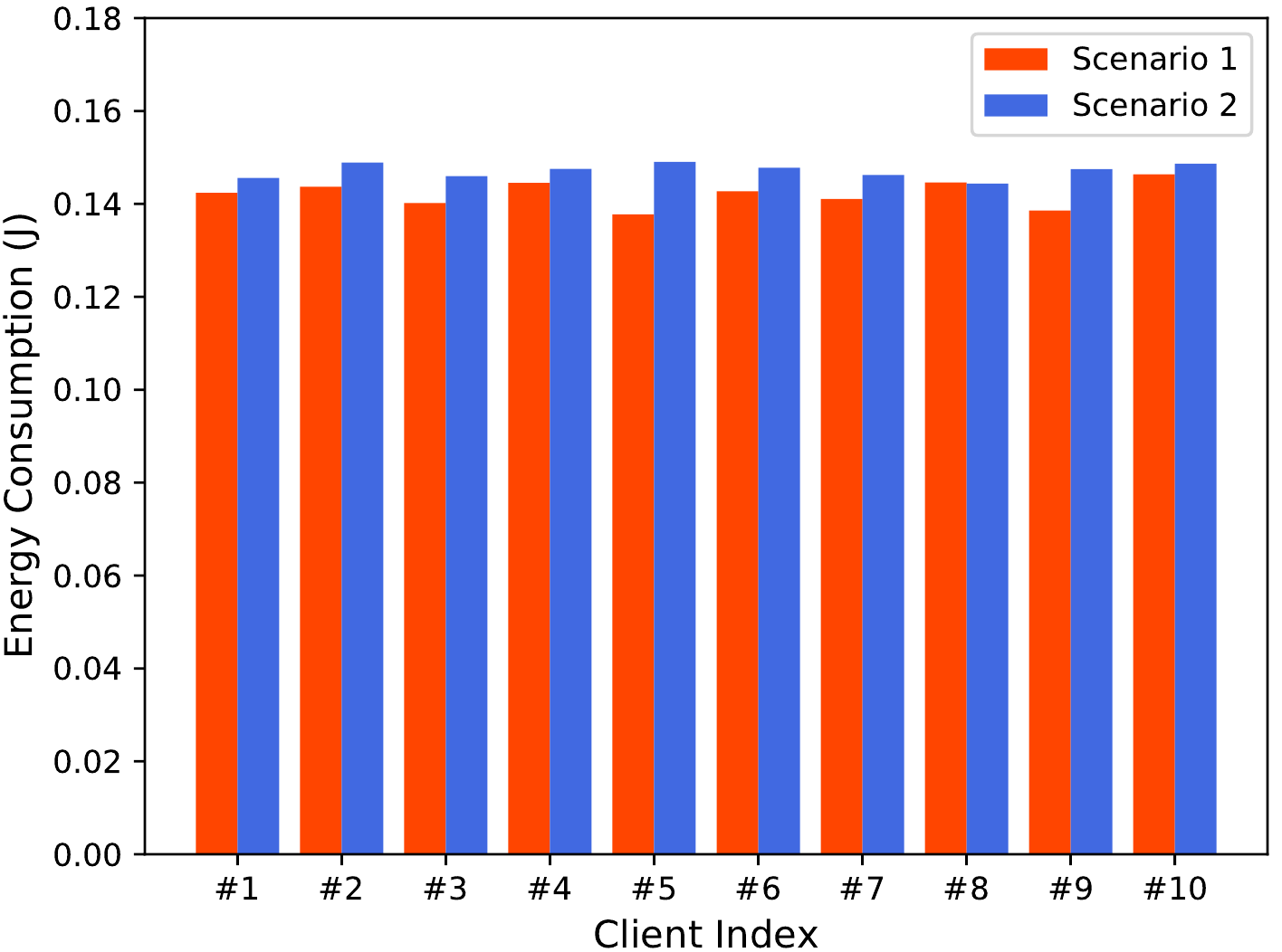}
\caption{\label{fig10}Energy Consumption of OCEAN-a for the Two Scenarios}
\end{figure}

\begin{figure}[htbp]
\centering
\begin{minipage}[t]{0.48\textwidth}
\centering
\includegraphics[width=8cm, height=5.7cm]{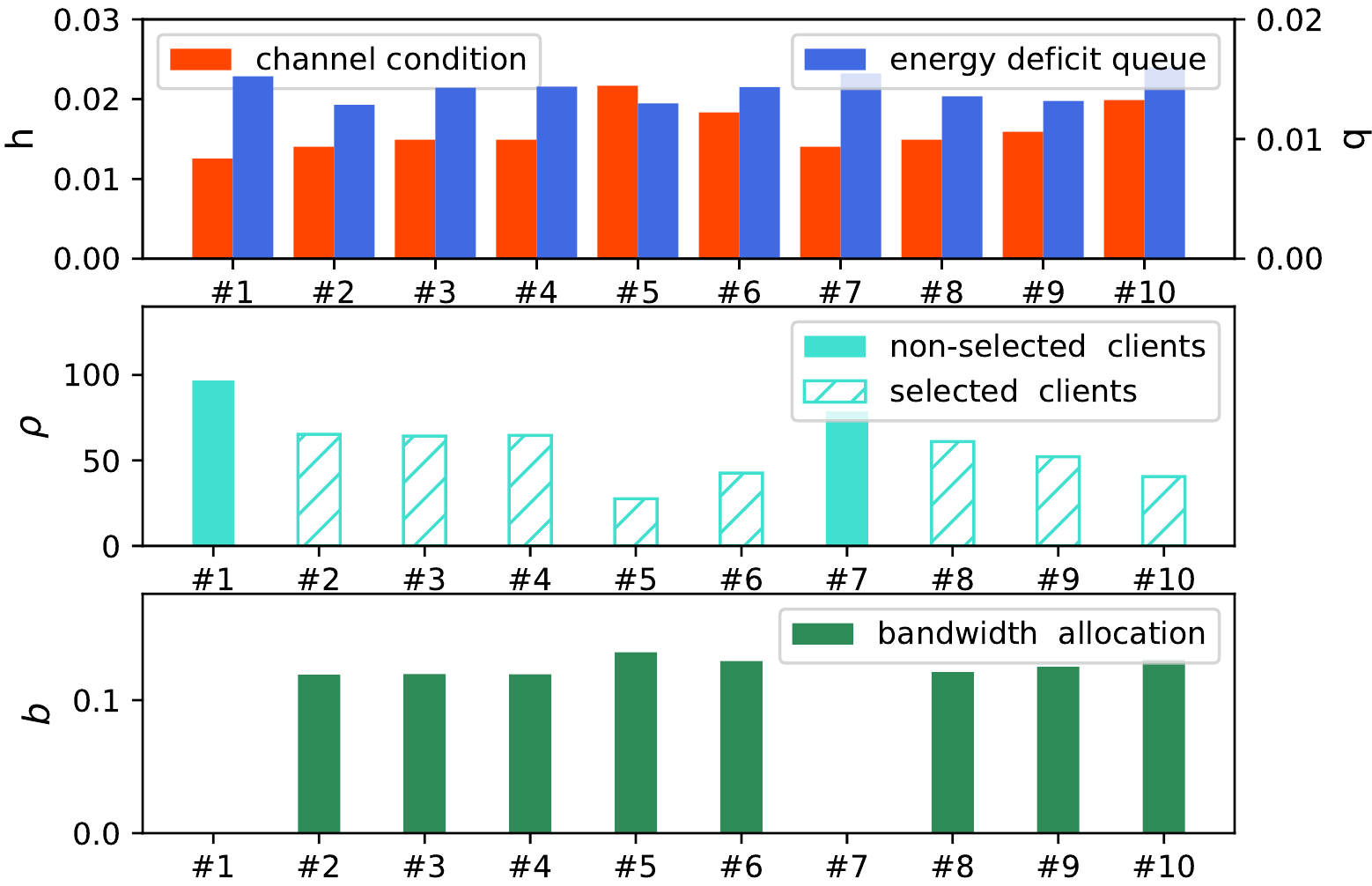}
\setlength{\abovecaptionskip}{0pt}
\caption{\label{fig11}Client Selection and Bandwidth Allocation Outcomes}
\end{minipage}
\begin{minipage}[t]{0.48\textwidth}
\centering
\includegraphics[width=8cm]{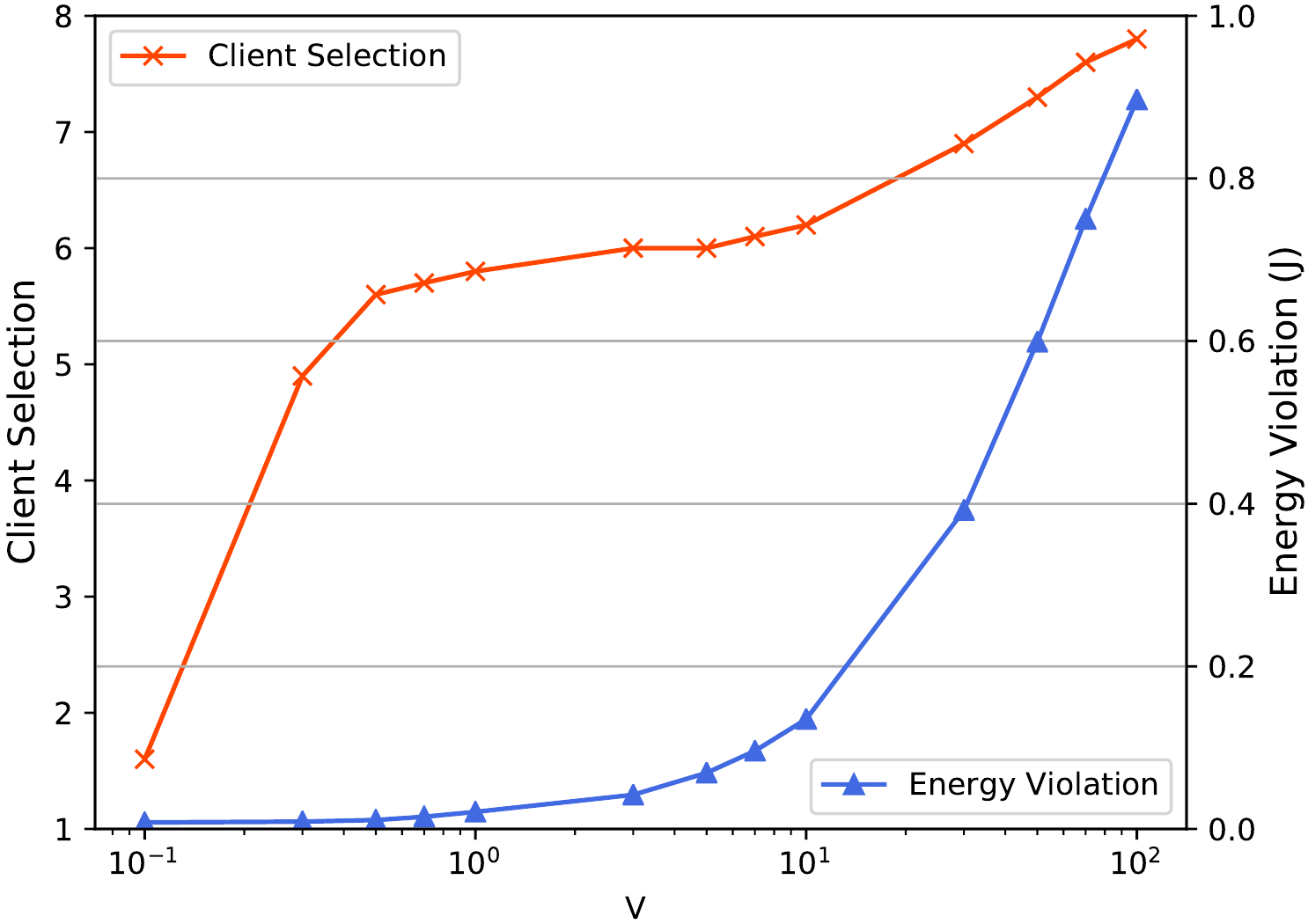}
\setlength{\abovecaptionskip}{0pt}
\caption{\label{fig12}Tradeoff Between Learning and Energy}
\end{minipage}
\end{figure}

\subsection{Features of OCEAN}
\subsubsection{Client Selection and Bandwidth Allocation Outcomes}
To have a deeper understanding of how OCEAN works, we illustrate, in one specific round, how clients are selected and bandwidth is allocated depending on the clients' channel condition and energy deficit queue in that round. In Figure \ref{fig11}, the upper subplot shows the current channel condition and energy deficit queue for each client. The middle subplot shows the computed selection priority $\rho$, with shaded bars indicating the selected clients. The bottom subplot shows the bandwidth allocation among the selected clients. As can be seen, a better channel condition and a larger deficit queue result in a higher priority (i.e., a smaller value of $\rho$). However, among the selected clients, more bandwidth is allocated to clients of a lower priority (i.e., a larger value of $\rho$).

\subsubsection{Learning - Energy Tradeoff}
Finally, we show the impact of the algorithm parameter $V$ on achieving different learning v.s. energy tradeoff of OCEAN. Figure \ref{fig12} shows the number of selected clients, the learning accuracy and the per-client energy consumption violation as a function of $V$. As can be seen, a larger $V$ emphasizes more on the learning performance, resulting in more selected clients and higher accuracy. On the other hand, a smaller $V$ emphasizes more on the energy consumption, resulting in a smaller violation (if any) on the total energy budget.

\section{Conclusion}
Resource allocation in wireless networks is an old topic, but it also faces constantly changing new challenges as new applications emerge. With FL being the trending new wireless network application, the old mindset of resource allocation for traditional applications such as file downloading or video streaming must be changed. This paper identifies a key property of FL, namely the temporal dependency and varying significance of learning rounds, that may significantly reshape how wireless resources should be allocated for optimized network and learning performance, yet is largely overlooked in the literature. While our formulation and algorithm have shown superior performance in real-world FL experiments, there are several future research directions that may extend the impact of this work. For example, we showed that an ascending client selection pattern is generally desired, but it is still not clear what the optimal pattern is. Moreover, client heterogeneity in terms of the computing power and local data size/distribution can be incorporated into the model to further enhance the understanding of resource allocation in more complex WFLNs.


%

\appendices
\section{Proof of Lemma 1}
\label{proofLemma1}
To prove the monotonicity and convexity of $f(x)$, we investigate the first and second order derivatives, respectively. The first-order derivative is
\begin{align}
f'(x) = 2^\frac{\beta}{x} - 1 + x\ln{2}\cdot 2^\frac{\beta}{x} (-\frac{\beta}{x^2})  = 2^\frac{\beta}{x}\left(1 - \ln{2}\cdot\frac{\beta}{x}\right)  -1
\end{align}

The second-order derivative is
\begin{align}
f''(x) = \ln{2}\cdot 2^\frac{\beta}{x} (-\frac{\beta}{x^2}) (1 - \ln{2}\cdot\frac{\beta}{x}) + 2^\frac{\beta}{x}\ln{2}\cdot \frac{\beta}{x^2} = (\ln{2})^2 2^\frac{\beta}{x}\frac{\beta^2}{x^3}
\end{align}
Therefore, for $x\in (0, \infty)$, $f''(x) > 0$ and hence $f'(x)$ is an increasing function. We also know that $\lim_{x\to \infty} f'(x) = 0$ and thus, $f'(x) < 0$ for $x \in (0, \infty)$.  This proves that $f(x)$ is decreasing on $(0, \infty)$. Similarly, for $x \in (-\infty, 0)$, $f''(x) < 0$ and hence $f'(x)$ is an decreasing function. Because $\lim_{x\to -\infty} f'(x) = 0$, $f'(x) > 0$ for $x \in (-\infty, 0)$. This proves that $f(x)$ is increasing on $(-\infty, 0)$. Moreover, $f(x)$ is convex on $(0, \infty)$ as $f''(x) > 0$; it is concave on $(-\infty, 0)$ as $f''(x) < 0$.

\section{Proof of Theorem 1}
\label{proofTheorem1}
The key is to prove that the optimal solution must have the following thresholding structure: there exists $k^*$ so that $a^*_k = 1, \forall k \leq k^*$ and $a^*_k = 0, \forall k > k^*$. To prove this, suppose that in the optimal solution  $(\a^*,\b^*)$, there exist $k_1 < k_2$ so that $a^*_{k_1} = 0$ and $a^*_{k_2} = 1$. Let us consider a different solution $(\tilde{\a}, \tilde{\b})$ which is obtained by swapping the decisions for $k_1$ and $k_2$ in $(\a^*,\b^*)$. Specifically,
\begin{align}
\tilde{a}_{k_1} = a^*_{k_2} = 1, \tilde{b}_{k_1} = b^*_{k_2}\\
\tilde{a}_{k_2} = a^*_{k_1} = 0, \tilde{b}_{k_2} = b^*_{k_1}
\end{align}
Since the decisions for other clients remain the same, the difference in the objective function value is
\begin{align}
&\sum_{k = 1}^K (\eta^t - \rho_k N_0\tilde{\tau}B f(\tilde{b}_k))\tilde{a}_k - \sum_{k = 1}^K (\eta - \rho_k N_0\tilde{\tau}B f(b^*_k)) a^*_k\\
=&(\eta-\rho_{k_1} N_0\tilde{\tau}B f(\tilde{b}_{k_1})) - (\eta-\rho_{k_2} N_0\tilde{\tau}B  f(b^*_{k_2})) = (\rho_{k_2} - \rho_{k_1}) N_0\tilde{\tau}B f(b^*_{k_2})) > 0
\end{align}
where $f(b) = b (2^\frac{L}{\tilde{\tau}B b} - 1)$. This is a contradiction to the optimality of $(\a^*, \b^*)$. Therefore, the optimal solution must have the aforementioned thresholding structure.

Next, we prove that the termination condition is correct. Let $i^*$ be the first client with $\eta - \rho_{i}N_0\tilde{\tau}Bf(b^*_{i^*}[i^*]) < 0$ where we use $b^*_k[i]$ to denote the optimal bandwidth allocation for the selection set $S = \{1, ..., i\}$. Clearly, when only clients $k \in \{1, ..., i^*-1\}$ are selected, we obtain a higher utility because
\begin{align}
&\sum_{k=1}^{i^*-1} (\eta - \rho_k N_0\tilde{\tau}Bf(b^*_k[i^*-1])) - \sum_{k=1}^{i^*} (\eta - \rho_k N_0\tilde{\tau}Bf(b^*_k[i^*])) \\
= &\sum_{k=1}^{i^*-1} (\eta - \rho_k N_0\tilde{\tau}Bf(b^*_k[i^*-1])) - \sum_{k=1}^{i^*-1} (\eta - \rho_k N_0\tilde{\tau}Bf(b^*_k[i^*]))\\
 &- (\eta - \rho_{i}N_0\tilde{\tau}Bf(b^*_{i^*}[i^*])) > 0
\end{align}
Therefore, client $i^*$ must not be in the optimal selection set. Because of the thresholding structure, we know that clients $k = i^* + 1, ..., K$ also must not be in the optimal selection set. This proves the correctness of the termination condition.

\section{Proof of Proposition 1}
\label{proofProposition1}
It suffices to consider the following optimization problem with two clients
\begin{align}
\min_{b_1, b_2}&~~ \rho_1 f(b_1) + \rho_2 f(b_2)\\
\text{s.t.}&~~b_1 + b_2 = \delta, ~~~b_1, b_2 \geq b_{min}
\end{align}
where $\delta$ is any constant in $(2 b_{\min}, 1]$. Let $\rho_1 < \rho_2$. Suppose the optimal bandwidth allocation satisfies $b^*_1 > b^*_2$, then by Lemma 1, we know $f(b^*_1) < f(b^*_2)$. Let us construct a different bandwidth allocation solution $\tilde{\b}$ where $\tilde{b}_1 = b^*_2$ and $\tilde{b}_2 = b^*_1$. In other words, the bandwidth allocation decisions are swapped. This solution also satisfies all constraints. We compare the respective objective values and have
\begin{align}
\rho_1 f(b^*_1) + \rho_2 f(b^*_2) - (\rho_1 f(\tilde{b}_1) + \rho_2 f(\tilde{b}_2)) =(\rho_1 - \rho_2)(f(b^*_1) - f(b^*_2)) > 0
\end{align}
This contradicts the optimality of $\b^*$. Therefore, we must have $b^*_1 \leq b^*_2$.

To prove $\rho_1 f(b^*_1) \leq \rho_2 f(b^*_2)$, let $b_2 = \delta - b_1$ and ignore the constraint that $b_1, b_2 \geq b_{min}$ for now. The first-order condition requires
\begin{align}
\rho_1 d f(b_1)/d b_1 + \rho_2 df(b_2)/d b_2 \cdot d b_2/ d b_1 = 0
\end{align}
This leads to
\begin{align}
\rho_1 f'(b^*_1) = \rho_2 f'(b^*_2)
\end{align}
Because $f'(x) < 0$, we can instead prove $f(b^*_1)/f'(b^*_1) \geq f(b^*_2)/f'(b^*_2)$. Let us define $g_1(x) \triangleq f(x)/f'(x)$. Since we have proven $b^*_1 \leq b^*_2$ in the above, we only need to prove that $g_1(x)$ is a non-increasing function in $x > 0$. To this end, consider the first order derivative of $g_1(x)$,
\begin{align}
g'_1(x) = \frac{(f'(x))^2 - f(x)f''(x)}{(f'(x))^2}
\end{align}
Let $g_2(x) \triangleq (f'(x))^2 - f(x)f''(x)$. We have to prove $g_2(x) \leq 0$ for $x > 0$.
\begin{align}
g_2(x)& = \left(2^\frac{\beta}{x}\left(1 - \ln{2}\cdot\frac{\beta}{x}\right)  -1\right)^2 - x(2^\frac{\beta}{x} -1)\cdot (\ln{2})^2 2^\frac{\beta}{x}\frac{\beta^2}{x^3}\\
& = (2^\frac{\beta}{x} - 1)^2 - 2(2^\frac{\beta}{x}-1)2^\frac{\beta}{x}\ln{2}\cdot\frac{\beta}{x} + (\ln{2})^2 2^\frac{\beta}{x}\frac{\beta^2}{x^2}
\end{align}
To simplify notations, we use a change of variable by letting $y = \beta/x$. Then proving $g_2(y) \leq 0$ for $y > 0$ is equivalent to proving $g_2(x) \leq 0$ for $x > 0$. We rewrite $g_2(y)$ below:
\begin{align}
g_2(y) \triangleq (2^y - 1)^2 - 2(2^y-1)2^y\ln{2}\cdot y + (\ln{2})^2 2^y y^2
\end{align}
Clearly, $g_2(0) = 0$. In order to prove $g_2(y) \leq 0$, we prove $g_2(y)$ is decreasing in $y \geq 0$.
\begin{align}
g'_2(y) = -(\ln{2})^2 y 2^{y}\left(4(2^y - 1) - \ln{2}\cdot y\right)
\end{align}
It is easy to verify that $g_3(y) \triangleq 4(2^y - 1) - \ln{2}\cdot y$ is increasing in $y > 0$ and $g_3(0) = 0$, which means $g_3(y) \geq 0$ for $y \geq 0$. Hence, $g'_2(y) < 0$ for $y > 0$. This concludes the proof for $\rho_1 f(b^*_1) \leq \rho_2 f(b^*_2)$ by ignoring the constraint $b_1, b_2 \geq b_{min}$.

When the constraint $b_1, b_2 \geq b_{min}$ is considered, there are two cases. In the first case, $b_{min} \leq b^*_1 \leq b^*_2$. In this case, the constraint is automatically satisfied and hence, our above conclusion holds. In the second case, $b^*_1 \leq b_{min} \leq b^*_2$. In this case, the optimal allocation is modified to $\tilde{b}^*_1 = b_{min} \geq b^*_1$ and $\tilde{b}^*_2 = 1-b_{min} \leq b^*_2$. Since $f(x)$ is a decreasing function, $\rho_1 f(\tilde{b}^*_1) \leq \rho_1 f(b^*_1) \leq \rho_2 f(b^*_2) \leq \rho_2 f(\tilde{b}^*_2)$. This completes the proof.

\section{Proof of Theorem 2}
\label{proofTheorem2}
We define the quadratic Lyapunov function $L(\q(t)) \triangleq \frac{1}{2}\sum_{k=1}^K q^2_k(t)$. Let $\Delta_1(t)$ be the 1-round Lyapunov drift yielded by some control decisions over one round: $\Delta_1(t) \triangleq L(\q(t+1)) - L(\q(t))$. Similarly, let $\Delta_R(t)$ be the $R$-round Lyapunov drift: $\Delta_R(t) \triangleq L(\q(t+R)) - L(\q(t))$. Based on the queue dynamics, we have
\begin{align}
\frac{1}{2}\sum_{k}^K q^2_k(t+1) &\leq \frac{1}{2}\sum_{k=1}^K [E_k(a^t_k, b^t_k|h^t_k) - H_k/T + q_k(t)]^2
\end{align}
Then, it can be easily show that
\begin{align}
\Delta_1(t) \leq C_1 +\sum_{k=1}^K q_k(t)\cdot[E_k(a^t_k, b^t_k|h^t_k) - H_k/T]
\end{align}
where $C_1$ is a constant satisfying $C_1 \geq \frac{1}{2}\sum_{k=1}^K (E^{\text{max}} - H^\text{min}/T)^2, \forall t$, which is finite due to the boundedness of the channel condition $h^t_k$ and the minimum bandwidth allocation requirement $b_{min}$. Next, it is straightforward that $\forall m$ and  $\forall t = mR, ..., (m+1)R - 1$
\begin{align}\label{one-slot-dpp}
V_m\cdot U(\a^t) - \Delta_1(t) \geq V_m\cdot U(\a^t) - \sum_{k=1}^K q_k(t)\cdot[E_k(a^t_k, b^t_k|h^t_k) - H_k] - C_1
\end{align}
As we can see, by solving \textbf{P3}, OCEAN-P actually maximizes a lower bound of $V_m\cdot U(\a^t) - \Delta_1(t)$. Let $\hat{\a}^0, \hat{\b}^0$, $...$, $\hat{\a}^{T-1}, \hat{\b}^{T-1}$ be the sequence of decisions derived by the online algorithm.

(a) Consider a specific sequence of decisions where $\tilde{a}^t_k = 0, \forall t, k$. Clearly, in this case, $U(\tilde{\a}^t) = 0$ and $E_k(\tilde{a}^t_k, \tilde{b}^t_k|h^t_k) - H_k/K  = -H_k/K$. Because $\hat{a}^t, \hat{b}$ maximizes the right-hand side of \eqref{one-slot-dpp}, we have
\begin{align}
V_m\cdot U(\hat{\a}^t) - \Delta_1(t) \geq  V_m\cdot 0 +\sum_{k=1}^K \tilde{q}_k(t) H_k - C_1 \geq -C_1
\end{align}
Therefore
\begin{align}
\Delta_1(t) \leq  V_m\cdot U(\hat{\a}^t) + C_1 \leq V_m \eta^t K + C_1
\end{align}
As enforced by the online algorithm,
\begin{align}
\Delta_R(mR) = \frac{1}{2}\sum_{k=1}^K (q^2_k(mR + R) - q^2_k(mR)) = \frac{1}{2}\sum_{k=1}^K q^2_k(mR + R)
\end{align}
is the $R$-round drift calculated after the $m$-th rest but before the $(m+1)$-th reset of the energy deficit queue (so $q_k(mR) = 0$). Thus, before the $(m+1)$-th reset of the energy deficit queue, we have
\begin{align}
\sum_{k=1}^K q^2_k(mR + R) = 2\Delta_R(mR) =2 \sum_{t=mR}^{mR+R-1}\Delta_1(t) \leq 2R(V_m\eta^t K + C_1)
\end{align}
Therefore, $\forall k$,
\begin{align}\label{qbound}
q_k(mR + R) \leq \sqrt{2R(V_m\eta^t K + C_1)}
\end{align}

On the other hand, according to the queue dynamics \eqref{queue}, we have
\begin{align}
q_k(t+1) - q_k(t) \geq E_k(a^t_k, b^t_k|h^t_k) - H_k/T
\end{align}
Summing both sides over the rounds in the $m$-th frame, namely $t = mR, ..., (m+1)R - 1$, and dividing by $R$, we have
\begin{align}\label{energy_deficit}
\frac{1}{R}\sum_{t = mR}^{(m+1)R - 1}(E_k(a^t_k, b^t_k|h^t_k) - H_k/T) \leq \frac{q_k((m+1)R) - q_k(mR)}{R} = \frac{q_k((m+1)R)}{R}
\end{align}

Plugging \eqref{qbound} into \eqref{energy_deficit}, we have
\begin{align}
\frac{1}{R}\sum_{t = mR}^{(m+1)R - 1}(E_k(\hat{a}^t_k, \hat{b}^t_k|h^t_k) - H_k/T) \leq \sqrt{\frac{2(V_m\eta^t K + C_1)}{R}}
\end{align}
Considering all $M$ frames, we obtain
\begin{align}
\sum_{t=0}^T E_k(\hat{a}^t_k, \hat{b}^t_k|h^t_k) \leq H_k + \sum_{m=0}^{M-1} \sqrt{\frac{2(V_m\eta^t K + C_1)}{R}}, \forall k
\end{align}

(b) Consider the $R$-round weighted learning utility minus drift:
\begin{align}
&V_m\sum_{t=mR}^{mR+R-1}U(\a^t) - \Delta_R(mR) \\
\geq & V_m\sum_{t=mR}^{mR+R-1}U(\a^t) - C_1R - \sum_{t=mR}^{mR+R-1}\sum_{k=1}^K  q_k(t)\cdot[E_k(a^t_k, b^t_k|h^t_k) - H_k/T]
\end{align}
Because $\hat{\a}^0, \hat{\b}^0, ..., \hat{\a}^{T-1},\hat{\b}^{T-1}$ explicitly maximizes the right-hand side of the above equation, the following must also hold
\begin{align}
&V_m\sum_{t=mR}^{mR+R-1}U(\hat{\a}^t) - \Delta_R(mR)\\
\geq &V_m\sum_{t=mR}^{mR+R-1}U(\a^{*,t}) - C_1R - \sum_{t=mR}^{mR+R-1}\sum_{k=1}^K  q_k(t)\cdot[E_k(a^{*,t}_k, b^{*,t}_k|h^t_k) - H_k/T]\\
\geq &V_m\sum_{t=mR}^{mR+R-1}U(\a^{*,t}) - C_1 R - \sum_{t=mR}^{mR+R-1}\sum_{k=1}^K (t-mR) E^{\text{max}} \cdot[E_k(a^{*,t}_k, b^{*,t}_k|h^t_k) - H_k/T] \\
&- \sum_{k=1}^K q_k(mR)\sum_{t=mR}^{mR+R-1} [E_k(a^{*,t}_k, b^{*,t}_k|h^t_k) - H_k/T]\\
\geq & V_m\sum_{t=mR}^{mR+R-1}U(\a^{*,t}) - \sum_{k=1}^K q_k(mR)\sum_{t=mR}^{mR+R-1} [E_k(a^{*,t}_k, b^{*,t}_k|h^t_k) - H_k/T] - \left(C_1R + \frac{R(R-1)K}{2} (E^{\text{max}})^2\right)\\
\geq & V_m\sum_{t=mR}^{mR+R-1}U(\a^{*,t}) - \left(C_1R + \frac{R(R-1)K}{2} (E^{\text{max}})^2\right)
\end{align}
where in $\hat{\a}^{*,0}, \hat{\b}^{*,0}, ..., \hat{\a}^{*,T-1},\hat{\b}^{*,T-1}$ is the optimal decision that solves the $R$-round lookahead problems. Notice that $q_k(t)$ in the above equation is still derived by OCEAN-P. The first inequality holds because OCEAN-P maximizes the lower bound. The last inequality holds because $q_k(mR)$ is reset to zero as enforced by OCEAN-P.

Noticing $\Delta_R(mR) \geq 0$ and dividing both sides by $V_m$, we have
\begin{align}
\sum_{t=mR}^{mR+R-1}U(\hat{\a}^t) \geq  \sum_{t=mR}^{mR+R-1}U(\a^{*,t}) - \frac{1}{V_m}\left(C_1R + \frac{R(R-1)K}{2} (E^{\text{max}})^2\right)
\end{align}
By summing over $m = 0, ..., M-1$, we have
\begin{align}
\sum_{t=0}^{T-1}U(\hat{\a}^t) \geq \sum_{m=0}^{M-1}U_m^* - C_2 \sum_{m=0}^{M-1}\frac{1}{V_m}
\end{align}
where $C_2 \triangleq C_1R + \frac{R(R-1)K}{2} (E^{\text{max}})^2$.


\bibliographystyle{IEEEtran}
\bibliography{bibligraphy}

\begin{thebibliography}{10}
\providecommand{\url}[1]{#1}
\csname url@samestyle\endcsname
\providecommand{\newblock}{\relax}
\providecommand{\bibinfo}[2]{#2}
\providecommand{\BIBentrySTDinterwordspacing}{\spaceskip=0pt\relax}
\providecommand{\BIBentryALTinterwordstretchfactor}{4}
\providecommand{\BIBentryALTinterwordspacing}{\spaceskip=\fontdimen2\font plus
\BIBentryALTinterwordstretchfactor\fontdimen3\font minus
  \fontdimen4\font\relax}
\providecommand{\BIBforeignlanguage}[2]{{%
\expandafter\ifx\csname l@#1\endcsname\relax
\typeout{** WARNING: IEEEtran.bst: No hyphenation pattern has been}%
\typeout{** loaded for the language `#1'. Using the pattern for}%
\typeout{** the default language instead.}%
\else
\language=\csname l@#1\endcsname
\fi
#2}}
\providecommand{\BIBdecl}{\relax}
\BIBdecl

\bibitem{konevcny2016federated}
J.~Konecny, H.~B. McMahan, F.~X. Yu, P.~Richtarik, A.~T. Suresh, and D.~Bacon,
  ``Federated learning: Strategies for improving communication efficiency,''
  \emph{arXiv preprint arXiv:1610.05492}, 2016.

\bibitem{yang2020federated}
K.~Yang, T.~Jiang, Y.~Shi, and Z.~Ding, ``Federated learning via over-the-air
  computation,'' \emph{IEEE Transactions on Wireless Communications}, vol.~19,
  no.~3, pp. 2022--2035, 2020.

\bibitem{zeng2019energy}
Q.~Zeng, Y.~Du, K.~K. Leung, and K.~Huang, ``Energy-efficient radio resource
  allocation for federated edge learning,'' \emph{arXiv preprint
  arXiv:1907.06040}, 2019.

\bibitem{neely2010stochastic}
M.~J. Neely, ``Stochastic network optimization with application to
  communication and queueing systems,'' \emph{Synthesis Lectures on
  Communication Networks}, vol.~3, no.~1, pp. 1--211, 2010.

\bibitem{konevcny2016federated2}
J.~Konecny, H.~B. McMahan, D.~Ramage, and P.~Richtarik, ``Federated
  optimization: Distributed machine learning for on-device intelligence,''
  \emph{arXiv preprint arXiv:1610.02527}, 2016.

\bibitem{karimireddy2019scaffold}
S.~P. Karimireddy, S.~Kale, M.~Mohri, S.~J. Reddi, S.~U. Stich, and A.~T.
  Suresh, ``Scaffold: Stochastic controlled averaging for on-device federated
  learning,'' \emph{arXiv preprint arXiv:1910.06378}, 2019.

\bibitem{li2018federated}
T.~Li, A.~K. Sahu, M.~Zaheer, M.~Sanjabi, A.~Talwalkar, and V.~Smith,
  ``Federated optimization in heterogeneous networks,'' \emph{arXiv preprint
  arXiv:1812.06127}, 2018.

\bibitem{haddadpour2019convergence}
F.~Haddadpour and M.~Mahdavi, ``On the convergence of local descent methods in
  federated learning,'' \emph{arXiv preprint arXiv:1910.14425}, 2019.

\bibitem{zhao2018federated}
Y.~Zhao, M.~Li, L.~Lai, N.~Suda, D.~Civin, and V.~Chandra, ``Federated learning
  with non-iid data,'' \emph{arXiv preprint arXiv:1806.00582}, 2018.

\bibitem{smith2017federated}
V.~Smith, C.-K. Chiang, M.~Sanjabi, and A.~S. Talwalkar, ``Federated multi-task
  learning,'' in \emph{Advances in Neural Information Processing Systems},
  2017, pp. 4424--4434.

\bibitem{corinzia2019variational}
L.~Corinzia and J.~M. Buhmann, ``Variational federated multi-task learning,''
  \emph{arXiv preprint arXiv:1906.06268}, 2019.

\bibitem{bhowmick2018protection}
A.~Bhowmick, J.~Duchi, J.~Freudiger, G.~Kapoor, and R.~Rogers, ``Protection
  against reconstruction and its applications in private federated learning,''
  \emph{arXiv preprint arXiv:1812.00984}, 2018.

\bibitem{geyer2017differentially}
R.~C. Geyer, T.~Klein, and M.~Nabi, ``Differentially private federated
  learning: A client level perspective,'' \emph{arXiv preprint
  arXiv:1712.07557}, 2017.

\bibitem{truex2019hybrid}
S.~Truex, N.~Baracaldo, A.~Anwar, T.~Steinke, H.~Ludwig, R.~Zhang, and Y.~Zhou,
  ``A hybrid approach to privacy-preserving federated learning,'' in
  \emph{Proceedings of the 12th ACM Workshop on Artificial Intelligence and
  Security}, 2019, pp. 1--11.

\bibitem{bonawitz2017practical}
K.~Bonawitz, V.~Ivanov, B.~Kreuter, A.~Marcedone, H.~B. McMahan, S.~Patel,
  D.~Ramage, A.~Segal, and K.~Seth, ``Practical secure aggregation for
  privacy-preserving machine learning,'' in \emph{Proceedings of the 2017 ACM
  SIGSAC Conference on Computer and Communications Security}, 2017, pp.
  1175--1191.

\bibitem{nasr2018comprehensive}
M.~Nasr, R.~Shokri, and A.~Houmansadr, ``Comprehensive privacy analysis of deep
  learning: Stand-alone and federated learning under passive and active
  white-box inference attacks,'' \emph{arXiv preprint arXiv:1812.00910}, 2018.

\bibitem{chen2018lag}
T.~Chen, G.~Giannakis, T.~Sun, and W.~Yin, ``Lag: Lazily aggregated gradient
  for communication-efficient distributed learning,'' in \emph{Advances in
  Neural Information Processing Systems}, 2018, pp. 5050--5060.

\bibitem{lin2017deep}
Y.~Lin, S.~Han, H.~Mao, Y.~Wang, and W.~J. Dally, ``Deep gradient compression:
  Reducing the communication bandwidth for distributed training,'' \emph{arXiv
  preprint arXiv:1712.01887}, 2017.

\bibitem{aji2017sparse}
A.~F. Aji and K.~Heafield, ``Sparse communication for distributed gradient
  descent,'' \emph{arXiv preprint arXiv:1704.05021}, 2017.

\bibitem{liu2019edge}
L.~Liu, J.~Zhang, S.~Song, and K.~B. Letaief, ``Edge-assisted hierarchical
  federated learning with non-iid data,'' \emph{arXiv preprint
  arXiv:1905.06641}, 2019.

\bibitem{amiri2020federated}
M.~M. Amiri and D.~Gunduz, ``Federated learning over wireless fading
  channels,'' \emph{IEEE Transactions on Wireless Communications}, 2020.

\bibitem{wang2019adaptive}
S.~Wang, T.~Tuor, T.~Salonidis, K.~K. Leung, C.~Makaya, T.~He, and K.~Chan,
  ``Adaptive federated learning in resource constrained edge computing
  systems,'' \emph{IEEE Journal on Selected Areas in Communications}, vol.~37,
  no.~6, pp. 1205--1221, 2019.

\bibitem{tran2019federated}
N.~H. Tran, W.~Bao, A.~Zomaya, N.~M. NH, and C.~S. Hong, ``Federated learning
  over wireless networks: Optimization model design and analysis,'' in
  \emph{IEEE INFOCOM 2019-IEEE Conference on Computer Communications}.\hskip
  1em plus 0.5em minus 0.4em\relax IEEE, 2019, pp. 1387--1395.

\bibitem{mo2020energy}
X.~Mo and J.~Xu, ``Energy-efficient federated edge learning with joint
  communication and computation design,'' \emph{arXiv preprint
  arXiv:2003.00199}, 2020.

\bibitem{zhan2020experience}
Y.~Zhan, P.~Li, and S.~Guo, ``Experience-driven computational resource
  allocation of federated learning by deep reinforcement learning,'' in
  \emph{Proc. of IPDPS}, 2020.

\bibitem{stich2018local}
S.~U. Stich, ``Local sgd converges fast and communicates little,'' \emph{arXiv
  preprint arXiv:1805.09767}, 2018.

\bibitem{yang2019scheduling}
H.~H. Yang, Z.~Liu, T.~Q. Quek, and H.~V. Poor, ``Scheduling policies for
  federated learning in wireless networks,'' \emph{IEEE Transactions on
  Communications}, 2019.

\bibitem{chen2019joint}
M.~Chen, Z.~Yang, W.~Saad, C.~Yin, H.~V. Poor, and S.~Cui, ``A joint learning
  and communications framework for federated learning over wireless networks,''
  \emph{arXiv preprint arXiv:1909.07972}, 2019.

\bibitem{shi2019device}
W.~Shi, S.~Zhou, and Z.~Niu, ``Device scheduling with fast convergence for
  wireless federated learning,'' \emph{arXiv preprint arXiv:1911.00856}, 2019.

\bibitem{nishio2019client}
T.~Nishio and R.~Yonetani, ``Client selection for federated learning with
  heterogeneous resources in mobile edge,'' in \emph{ICC 2019-2019 IEEE
  International Conference on Communications (ICC)}.\hskip 1em plus 0.5em minus
  0.4em\relax IEEE, 2019, pp. 1--7.

\bibitem{yang2019energy}
Z.~Yang, M.~Chen, W.~Saad, C.~S. Hong, and M.~Shikh-Bahaei, ``Energy efficient
  federated learning over wireless communication networks,'' \emph{arXiv
  preprint arXiv:1911.02417}, 2019.

\bibitem{chen2020convergence}
M.~Chen, H.~V. Poor, W.~Saad, and S.~Cui, ``Convergence time optimization for
  federated learning over wireless networks,'' \emph{arXiv preprint
  arXiv:2001.07845}, 2020.

\bibitem{tff2020}
M.~Reneer, ``Tensorflow federated,''
  \url{https://www.tensorflow.org/federated}, 2020.

\bibitem{boyd2004convex}
S.~Boyd, S.~P. Boyd, and L.~Vandenberghe, \emph{Convex optimization}.\hskip 1em
  plus 0.5em minus 0.4em\relax Cambridge university press, 2004.

\bibitem{grant2014cvx}
M.~Grant and S.~Boyd, ``Cvx: Matlab software for disciplined convex
  programming, version 2.1,'' 2014.

\bibitem{virtanen2020scipy}
P.~Virtanen, R.~Gommers, T.~E. Oliphant, M.~Haberland, T.~Reddy, D.~Cournapeau,
  E.~Burovski, P.~Peterson, W.~Weckesser, J.~Bright \emph{et~al.}, ``Scipy 1.0:
  fundamental algorithms for scientific computing in python,'' \emph{Nature
  methods}, pp. 1--12, 2020.

\end{thebibliography}

\end{document}